\theoremstyle{plain}
\newtheorem{theorem}{Theorem}
\newtheorem{proposition}[theorem]{Proposition}
\theoremstyle{definition}
\newtheorem{remark}{Remark}
\title{Non-Abelian Toda lattice and analogs of Painlev\'e III equation}
\date{March 17, 2022}
\author{V.E.\:Adler\thanks{L.D.\:Landau Institute for Theoretical Physics, Akademika Semenova 1A, 142432 Chernogolovka, Russian Federation (permanent address). E-mail: adler@itp.ac.ru} $^\ddag$,
M.P.\:Kolesnikov\thanks{Moscow Institute of Physics and Technology, Institutskiy 9, 141701 Dolgoprudny, Russian Federation (permanent address).} \thanks{Institute of Mathematics, Ufa Federal Research Centre, Russian Academy of Sciences, Chernyshevsky 112, 450008 Ufa, Russian Federation}}
\begin{document}
\maketitle

\begin{abstract}
In integrable models, stationary equations for higher symmetries serve as one of the main sources of reductions consistent with dynamics. We apply this method to the non-Abelian two-dimensional Toda lattice. It is shown that already the stationary equation of the simplest higher flow gives a non-trivial non-autonomous constraint that reduces the Toda lattice to a non-Abelian analog of the pumped Maxwell--Bloch equations. The Toda lattice itself is interpreted as an auto-B\"acklund transformation acting on the solutions of this system. Further self-similar reduction leads to non-Abelian analogs of the Painlev\'e III equation.
\medskip

\noindent{\small Keywords: non-Abelian Toda lattice, self-similar solution, Painlev\'e equation}
\end{abstract}

\section{Introduction}

The non-Abelian Toda lattice \cite{Mikhailov_1981}
\[
 (g_{n,x}g^{-1}_n)_y=g_{n+1}g^{-1}_n-g_ng^{-1}_{n-1}
\]
is one of the fundamental three-dimensional models (two continuous independent variables $x,y$ and one discrete variable $n$). If the field variables are scalar then the substitution $g_n=e^{u_n}$ is possible which leads to the equation $u_{n,xy}=e^{u_{n+1}-u_n}-e^{u_n-u_{n-1}}$ \cite{Mikhailov_1979}. In the non-Abelian setting, we assume that $g_n$ are elements of an arbitrary non-commutative algebra ${\cal A}$ with identity element 1 and the operation of taking the inverse (for example, a matrix algebra). The polynomial form of the Toda lattice \cite{Salle_1982} 
\begin{equation}\label{TL}
 f_{n,y}=p_n-p_{n+1},\quad p_{n,x}=f_np_n-p_nf_{n-1}
\end{equation}
is obtained by the substitution $p_n=-g_ng^{-1}_{n-1}$ and $f_n=g_{n,x}g^{-1}_n$. The aim of our work is to construct a reduction of this lattice equation to non-Abelian analogs of the Painlev\'e equation P$_3$
\begin{equation}\label{P3}
 w''=\frac{(w')^2}{w}-\frac{w'}{z} +\frac{1}{z}(\alpha w^2+\beta) +\gamma w^3+\frac{\delta}{w}.
\end{equation}
The reduction is carried out in two stages. First, we reduce (\ref{TL}) to a two-dimensional system. The easiest way to do this is related with the 2-periodicity condition, which leads to the non-Abelian sinh--Gordon equation \cite{Salle_1982,Li_Nimmo_2008}. In the scalar case, other boundary conditions are also known, leading to exponential systems associated with Cartan matrices for simple Lie algebras \cite{Leznov_Saveliev_1980}. The reduction we use is of a different type and, as far as we know, has not been considered before (although it leads to some well-known equations). It has the form of some non-autonomous constraint and is related to the stationary equation of the simplest higher symmetry from the Toda lattice hierarchy. This reduction is defined in Section \ref{s:reduction}. 
It leads to a two-dimensional system, which can be considered as a non-Abelian analog of the pumped Maxwell--Bloch equations \cite{Burtsev_Zakharov_Mikhailov_1987}. In this case, the auxiliary linear equations for the Toda lattice are transformed into the zero curvature representation with variable spectral parameter for this system, and the shift in $n$ is interpreted as a B\"acklund transformation, see Section \ref{s:MB}.

At the next stage (sections \ref{s:P30} and \ref{s:P3}), we apply self-similar reductions, which lead to third order ODE systems of Painlev\'e type. In the non-Abelian case, these systems have no global first integrals; however, the reduction of the order of equations is possible on some special invariant submanifolds. This brings to non-Abelian analogs of P$_3$ with full and degenerate sets of parameters. The corresponding isomonodromic Lax pairs and B\"acklund transformations are also derived. These results generalize the scalar reductions described in \cite{Winternitz_1992,Burtsev_1993,Schief_1994,Clarkson_Mansfield_Milne_1996}.

It should be noted that analogs of the Painlev\'e equations under study are written not as a single second order equation, but as a coupled system of two first order equations. If the variables are scalar, then the exclusion of one of them and some additional transformations lead to P$_3$ in the standard form, but in the non-Abelian case this is not always possible (at least, if we restrict ourselves to the operations of addition, multiplication, and taking the inverse element). Non-Abelian analogs of the Painlev\'e equations are being actively studied, see for example \cite{Retakh_Rubtsov_2010,Kawakami_2015}. It is known that a scalar equation can admit several non-equivalent non-Abelian analogs. Some classification results based on the Kovalevskaya--Painlev\'e test and other approaches can be found in recent papers \cite{Adler_Sokolov_2021, Bobrova_Sokolov_2021a, Bobrova_Sokolov_2021b}. Here we do not try to reproduce all known analogs of P$_3$ and restrict ourselves to only those that are obtained as a result of our reduction.

\section{Reduction of the Toda lattice}\label{s:reduction}

We start by defining a constraint that is compatible with the Toda lattice. The following assertion is central to our construction, so we present it with a detailed proof. We then explain the origin of this reduction and show how it extends to linear equations for $\psi$-functions.

\begin{theorem}\label{th:ffp}
The non-Abelian Toda lattice (\ref{TL}) admits the constraint
\begin{equation}\label{ffp}
 f_{n-1}+f_n=\mu_np^{-1}_n,\quad \mu_n:=\varepsilon n+\mu_0,\quad \varepsilon,\mu_0\in\mathbb C.
\end{equation}
\end{theorem}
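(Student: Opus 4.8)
The plan is to show that the constraint \eqref{ffp} cuts out an invariant manifold for the dynamics \eqref{TL}, i.e.\ that it is preserved by both the $x$- and the $y$-flow. Since \eqref{TL} prescribes only $f_{n,y}$ and $p_{n,x}$, while $f_{n,x}$ and $p_{n,y}$ are a priori free, the strategy is to differentiate \eqref{ffp} along each direction: the resulting identities will fix these free derivatives, and the actual content of the theorem is that the fixing is self-consistent.

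First I would differentiate \eqref{ffp} in $x$. Writing $(p^{-1}_n)_x=-p^{-1}_n p_{n,x}p^{-1}_n$ and substituting $p_{n,x}=f_np_n-p_nf_{n-1}$ collapses the right-hand side to $-p^{-1}_nf_n+f_{n-1}p^{-1}_n$; replacing $\mu_np^{-1}_n$ by $f_{n-1}+f_n$ from the constraint itself then makes $p_n$ disappear completely and leaves the purely ``$f$'' relation
\[
 f_{n-1,x}+f_{n,x}=f^2_{n-1}-f^2_n .
\]
Differentiating \eqref{ffp} in $y$ is even simpler: the left-hand side telescopes through $f_{n,y}=p_n-p_{n+1}$ to $p_{n-1}-p_{n+1}$, which, matched against $\mu_n(p^{-1}_n)_y=-\mu_np^{-1}_np_{n,y}p^{-1}_n$, yields the reduced $y$-evolution $p_{n,y}=\mu^{-1}_np_n(p_{n+1}-p_{n-1})p_n$.

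The hard part will be the compatibility check, and here I expect both the non-commutativity and the specific affine form of $\mu_n$ to be essential. Taking the ``$f$'' relation above and differentiating it in $y$ via $(f_{n,x})_y=(f_{n,y})_x=p_{n,x}-p_{n+1,x}$ (and the analogous shift at $n-1$) reduces everything to fields at indices $n-1,n,n+1$; the neighbouring constraints $f_{n-2}+f_{n-1}=\mu_{n-1}p^{-1}_{n-1}$ and $f_n+f_{n+1}=\mu_{n+1}p^{-1}_{n+1}$ are then used to eliminate $f_{n-2}$ and $f_{n+1}$. After these substitutions all $p$-dependent terms should cancel in pairs, and I expect the identity to collapse to $p_n(f_{n-1}+f_n)+(f_{n-1}+f_n)p_n=\mu_{n-1}+\mu_{n+1}$. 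Using the constraint once more, the left-hand side equals $2\mu_n$, so consistency holds if and only if $\mu_{n-1}+\mu_{n+1}=2\mu_n$, which is precisely the statement that $\mu_n=\varepsilon n+\mu_0$ is affine in $n$.

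The main obstacle, then, is organizing the non-commutative bookkeeping in this last step: with no freedom to reorder factors, the cancellations must be arranged by pairing terms carrying the same one-sided factor and by invoking the constraint at three consecutive sites. The dual compatibility condition $(p_{n,x})_y=(p_{n,y})_x$, obtained from the reduced expression for $p_{n,y}$, should close in the same way and reduce to the same affine condition on $\mu_n$; I would verify one of the two directions in detail and note that the other is entirely analogous.
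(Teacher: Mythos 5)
Your generic-case computation is correct, and it is essentially the paper's own argument in a different organization: the paper first eliminates $p_n$ via $p_n=\mu_n(f_{n-1}+f_n)^{-1}$ and verifies the cross-derivative consistency of the resulting pair (\ref{fx}), (\ref{fy}), whereas you retain $p_n$ and check $(f_{n-1,x}+f_{n,x})_y=(f_{n-1,y}+f_{n,y})_x$ directly. Both routes reduce, after using the constraint at the three sites $n-1$, $n$, $n+1$, to
\[
 p_n(f_{n-1}+f_n)+(f_{n-1}+f_n)p_n=\mu_{n-1}+\mu_{n+1}
 \quad\Longleftrightarrow\quad 2\mu_n=\mu_{n-1}+\mu_{n+1},
\]
so your claimed collapse is exactly right, non-commutativity included. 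Also, your ``dual'' check $(p_{n,x})_y=(p_{n,y})_x$ is not actually an independent verification: since the constraint expresses $p_n$ explicitly through the $f$'s, consistency of the mixed derivatives of $p_n$ follows automatically from that of the $f$'s --- this is precisely what the paper's elimination of $p$ makes transparent, and you could shorten your plan by adopting it.

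The genuine gap is that your argument covers only the case $\mu_n\ne0$ for all $n$, while the theorem allows arbitrary $\varepsilon,\mu_0\in\mathbb{C}$, hence in particular $\mu_k=0$ for exactly one integer $k$ (e.g.\ $\varepsilon=1$, $\mu_0=0$), or $\mu_n\equiv0$. Your derivation of the reduced evolution $p_{n,y}=\mu_n^{-1}p_n(p_{n+1}-p_{n-1})p_n$ divides by $\mu_n$; more importantly, when $\mu_k=0$ the $y$-differentiation of the constraint at site $k$ does not determine $p_{k,y}$ at all, but instead produces a \emph{secondary constraint} $p_{k-1}=p_{k+1}$, whose propagation under the flows requires a separate analysis. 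This is exactly what part 3) of the paper's proof supplies: the solution acquires the reflection symmetry (\ref{refl}), the lattice collapses to the half-line system (\ref{half-line}), and one additional boundary compatibility identity,
\[
 \bigl(p_0-2(f_1+f_2)^{-1}\bigr)_x=(f^2_0-f^2_1)_y,
\]
must be checked, which has no counterpart in your scheme. (The remaining case $\mu_n\equiv0$ is the trivial 2-periodic reduction.) Without treating these degenerate values of $\mu_n$, the theorem as stated --- for all $\varepsilon,\mu_0$ --- is not proved; your proposal should either add this case or restrict its claim to $\mu_n\ne0$.
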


\begin{proof} 
1) The case $\varepsilon=\mu_0=0$ amounts to the 2-periodicity condition (since the differentiation of the equality $f_{n-1}+f_n=0$ with respect to $y$ implies $p_{n-1}=p_{n+1}$). Suppose now that $\mu_n\not\equiv0$, then the equality $\mu_n=0$ is possible for at most one value of $n$. 

2) Consider first the case $\mu_n\ne0$ for all $n$. Then the constraint (\ref{ffp}) leads to a separation of variables and the Toda lattice reduces to the pair of two-dimensional equations
\begin{gather}
\label{fx}
  f_{n,x}+f_{n+1,x}=f^2_n-f^2_{n+1},\\
\label{fy}
 f_{n,y}=\mu_n(f_{n-1}+f_n)^{-1}-\mu_{n+1}(f_n+f_{n+1})^{-1}.
\end{gather}
Indeed,
\begin{equation}\label{ffx}
\begin{aligned}
 f_{n-1,x}+f_{n,x}&=\mu_n(p^{-1}_n)_x=-\mu_np^{-1}_nf_n+\mu_nf_{n-1}p^{-1}_n\\
  &=-(f_{n-1}+f_n)f_n+f_{n-1}(f_{n-1}+f_n)=f^2_{n-1}-f^2_n.
\end{aligned}
\end{equation} 
Conversely, if the variables $f_n$ satisfy the pair (\ref{fx}) and (\ref{fy}) then we obtain a solution of the Toda lattice by setting $p_n=\mu_n(f_{n-1}-f_n)^{-1}$. To prove the theorem, it suffices to verify that equations (\ref{fx}) and (\ref{fy}) are consistent, which is expressed by the equality
\[
 D_y(f^2_n-f^2_{n+1})=D_x\bigl(\mu_n(f_{n-1}+f_n)^{-1}-\mu_{n+2}(f_{n+1}+f_{n+2})^{-1}\bigr).
\]
We write this as $a_n-a_{n+1}=0$, where
\[
 a_n= D_y(f^2_n)-D_x\bigl(\mu_n(f_{n-1}+f_n)^{-1}+\mu_{n+1}(f_n+f_{n+1})^{-1}\bigr).
\]
First, we transform one term from the right side. From (\ref{ffx}) it follows
\begin{align*}
 -D_x\bigl((f_{n-1}+f_n)^{-1}\bigr)= f_{n-1}(f_{n-1}+f_n)^{-1}-(f_{n-1}+f_n)^{-1}f_n.
\end{align*}
Then
\begin{align*}
 a_n&= f_n\bigl(\mu_n(f_{n-1}+f_n)^{-1}-\mu_{n+1}(f_n+f_{n+1})^{-1}\bigr)\\ 
    &\quad +\bigl(\mu_n(f_{n-1}+f_n)^{-1}-\mu_{n+1}(f_n+f_{n+1})^{-1}\bigr)f_n\\ 
    &\quad +\mu_n\bigl(f_{n-1}(f_{n-1}+f_n)^{-1} - (f_{n-1}+f_n)^{-1}f_n\bigr)\\
    &\quad +\mu_{n+1}\bigl(f_n(f_n+f_{n+1})^{-1} - (f_n+f_{n+1})^{-1}f_{n+1}\bigr)\\
 &= \mu_n-\mu_{n+1}
\end{align*}
(here and below we identify the number $\mu\in{\mathbb C}$ with the element $\mu 1\in\cal A$, where $1$ is the unit element of the noncommutative algebra), therefore $a_n-a_{n+1}=\mu_n-2\mu_{n+1}+\mu_{n+2}$, which is 0 for $\mu_n=\varepsilon n+\mu_0$.

3) If $\mu_k=0$ for some $k$ then $\varepsilon\ne0$ and $\mu_0=-\varepsilon k$. We can set $\mu_n=n$ without loss of generality, by scaling and changing $n\to n-k$. Then it is easy to see that the solution satisfies the reflection symmetry
\begin{equation}\label{refl}
 f_{-n}=-f_{n-1},\quad p_{-n}=p_n,\quad n=1,2,\dots
\end{equation}
with the lattice equations (\ref{fy}) and (\ref{fx}) restricted to the half-line $n=1,2,\dots$:
\begin{equation}\label{half-line}
\begin{gathered} 
 f_{0,y}=p_0-(f_0+f_1)^{-1},\quad f_{n,y}=n(f_{n-1}+f_n)^{-1}-(n+1)(f_n+f_{n+1})^{-1},\\ 
 p_{0,x}=f_0p_0+p_0f_0,\quad f_{n-1,x}+f_{n,x}=f^2_{n-1}-f^2_n.
\end{gathered}
\end{equation}
The compatibility of these equations is proved in the same way as before, except that when checking the equation $(f_0+f_1)_{xy}=(f_0+f_1)_{yx}$, the equality arises
\[
 \bigl(p_0-2(f_1+f_2)^{-1}\bigr)_x=(f^2_0-f^2_1)_y.
\]
It is easy to verify that it holds identically.
\end{proof}

\begin{remark}
It is possible to write the constraint (\ref{ffp}) in the form $p_n=\mu_n(f_{n-1}-f_n)^{-1}$, but this is not quite equivalent, since in the case 3) for $\mu_n=n$ it leads to the boundary condition $p_0=0$ instead of $f_{-1}+f_0=0$. Then the equation $p_{0,x}=f_0p_0-p_0f_{-1}$ holds automatically, while the reflection conditions (\ref{refl}) are not necessary. As a result, equations (\ref{fy}) and (\ref{fx}) turn into two independent sybsystems, one for the variables $f_0,f_1,\dots$ and another for $f_{-1},f_{-2},\dots$. Each of these subsystems is equivalent to (\ref{half-line}) with $p_0=0$. Therefore, the constraint in the form (\ref{ffp}) is a bit more general. The equation $(f_{-1}+f_0)p_0=0$ may also have solutions with zero divisors, but we will not analyze the resulting boundary conditions.
\end{remark}

Note that (\ref{fx}) and (\ref{fy}) are themselves well-known integrable lattice equations. Equation (\ref{fx}) defines B\"acklund transformations with zero parameters for the equation $f_t=f_{xxx}-3f^2f_x-3f_xf^2$ which is one of two non-Abelian versions of the modified KdV equation. The substitution $v_n=p_n/\mu_n$ turns (\ref{fy}) into
\[
 v_{n,y}=v_n(\mu_{n+1}v_{n+1}-\mu_{n-1}v_{n-1})v_n.
\]
For $\mu_n=1$, this is the modified Volterra lattice (again, one of two non-Abelian versions), and for $\mu_n=n$ this is its master-symmetry. Master-symmetries for scalar equations of Volterra lattice type were studied in \cite{Oevel_Zhang_Fuchssteiner_1989, Cherdantsev_Yamilov_1995}, some non-Abelian generalizations and reductions to the Painlev\'e equations were found in \cite{Adler_2020}. In the scalar case, the consistency of equations (\ref{fx}) and (\ref{fy}) was noted in \cite{Shabat_Yamilov_1991} (for $\mu_n=1$) and in \cite{Garifullin_Habibullin_Yamilov_2015} (for $\mu_n=n$), but the fact that they are both embedded in the two-dimensional Toda lattice went unnoticed. Similar results are also known regarding the compatibility of differential-difference equations with discrete equations on a square lattice, see e.g.~\cite{Xenitidis_2011}.

Let us demonstrate that the reduction (\ref{ffp}) is related to the stationary equation for the simplest higher symmetry of the Toda lattice. Recall that this lattice itself serves as a compatibility condition for the linear equations
\begin{equation}\label{DxDy}
 \psi_{n,x}= \psi_{n+1}+f_n\psi_n,\quad \psi_{n,y}= p_n\psi_{n-1},
\end{equation}
and its symmetries are defined as compatibility conditions of (\ref{DxDy}) with equations of the form
\[
 \psi_{n,t_k}=\psi_{n+k}+h^{(k,1)}_n\psi_{n+k-1}+\dots+h^{(k,k)}_n\psi_n.
\] 
The gauge $h^{(k,1)}_n=f_n+\dots+f_{n+k-1}$ can be chosen without loss of generality. In particular, for $k=2$ we have the equation
\begin{equation}\label{Dt}
 \psi_{n,t}=\psi_{n+2}+(f_n+f_{n+1})\psi_{n+1}+h_n\psi_n
\end{equation}
and the conditions of its compatibility with (\ref{DxDy}) are of the form
\begin{gather}
\label{xt}
 f_{n,x}+f_{n+1,x}=f^2_n-f^2_{n+1}-h_n+h_{n+1},\quad h_{n,x}=f_{n,t}+[f_n,h_n],\\
\label{yt}
 h_{n,y}=p_n(f_{n-1}+f_n)-(f_n+f_{n+1})p_{n+1},\quad p_{n,t}=h_np_n-p_nh_{n-1}.
\end{gather}
It is easy to see that if all field variables are commutative, then the requirement that $f_n$ and $p_n$ be independent of $t$ implies that $h_n$ is independent of $x$ and $n$, and then equations are reduced to (\ref{fx}) and (\ref{ffp}), up to a transformation of $y$. Thus, in the scalar case the constraint (\ref{ffp}) is equivalent to the stationary equation for the higher symmetry. In the non-Abelian setting, the stationary equation is more general, but it still holds true by virtue of the constraint (\ref{ffp}), if we choose $h_n=-\varepsilon y-\kappa\in{\mathbb C}$. 

Equations (\ref{DxDy}) and the stationary equation (\ref{Dt}) 
\[
 \psi_{n+2}+(f_{n+1}+f_n)\psi_{n+1}=(\varepsilon y+\kappa)\psi_n
\]
can be transformed into zero curvature representations for the reduction under study, with $\kappa$ playing the role of a spectral parameter. Indeed, let us write these equations in the matrix form
\[
 \Psi_{n,x}=U_n\Psi_n,\quad \Psi_{n,y}=V_n\Psi_n,\quad \Psi_{n+1}=W_n\Psi_n, 
\]
where $\Psi_n=(\psi_n,\psi_{n+1})^t$ and
\[
 U_n=\begin{pmatrix}
  f_n & 1\\
  \varepsilon y+\kappa & -f_n
 \end{pmatrix},\quad
 V_n=\begin{pmatrix}
  \dfrac{\mu_n}{\varepsilon y+\kappa} & \dfrac{p_n}{\varepsilon y+\kappa}\\
  p_{n+1} & 0
 \end{pmatrix},\quad
 W_n=\begin{pmatrix}
  0 & 1\\
 \varepsilon y+\kappa & -f_n-f_{n+1}
 \end{pmatrix}.
\]
Then the compatibility conditions are
\[
 W_{n,x}=U_{n+1}W_n-W_nU_n,\quad W_{n,y}=V_{n+1}W_n-W_nV_n,\quad U_{n,y}=V_{n,x}+[V_n,U_n]
\]
and it is easy to check that these equations are equivalent to (\ref{TL}) and (\ref{ffp}). The above matrices can be brought to a more symmetric form by introducing the variable spectral parameter $\lambda^2=\varepsilon y+\kappa$ (this leads to appearance of a term with $\partial_\lambda$ in the linear equations for $\psi$-functions) \cite{Burtsev_Zakharov_Mikhailov_1987}. Simple calculations bring to the following representation.

\begin{proposition}\label{pr:zcr}
The Toda lattice equations (\ref{TL}) with the constraint (\ref{ffp}) are equivalent to equations
\begin{equation}\label{zcr}
 \begin{gathered}
 W_{n,x}=U_{n+1}W_n-W_nU_n,\quad 
 W_{n,y}+\frac{\varepsilon}{2\lambda}W_{n,\lambda}=V_{n+1}W_n-W_nV_n,\\
 U_{n,y}+\frac{\varepsilon}{2\lambda}U_{n,\lambda}=V_{n,x}+[V_n,U_n]
 \end{gathered}
\end{equation}
with the matrices
\begin{equation}\label{UVW}
 U_n=\begin{pmatrix}
  f_n & \lambda\\
  \lambda & -f_n
 \end{pmatrix},\quad
 V_n=\lambda^{-1}\begin{pmatrix}
  \frac{\mu_n+\mu_{n+1}}{2\lambda} & p_n\\
  p_{n+1} & 0
 \end{pmatrix},\quad
 W_n=\begin{pmatrix}
  0 & \lambda\\
 \lambda & -f_n-f_{n+1}
 \end{pmatrix}.
\end{equation}
\end{proposition}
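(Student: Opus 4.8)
The plan is to derive the symmetric representation (\ref{zcr})--(\ref{UVW}) from the non-symmetric zero curvature representation displayed just above the proposition, whose compatibility conditions are already known to be equivalent to (\ref{TL}) and (\ref{ffp}). Denote by $\hat U_n,\hat V_n,\hat W_n$ those non-symmetric matrices, in which $\kappa$ is a constant spectral parameter and $\partial_y$ acts at fixed $\kappa$. Two operations should bridge the two forms: the change of spectral parameter $\kappa\mapsto\lambda$ defined by $\lambda^2=\varepsilon y+\kappa$, and the gauge transformation $\Psi_n\mapsto G\Psi_n$ by the $n$-independent diagonal matrix $G=\mathrm{diag}(\lambda,1)$. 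Since an invertible reparametrization of the spectral parameter and a gauge transformation both preserve zero curvature conditions, it suffices to check that these operations carry $\hat U_n,\hat V_n,\hat W_n$ into (\ref{UVW}) and the old $y$-derivative into the operator $\partial_y+\frac{\varepsilon}{2\lambda}\partial_\lambda$.

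First I would record how the spectral change affects differentiation. Treating the entries as functions of $(y,\lambda)$ with $\kappa$ eliminated through $\lambda^2=\varepsilon y+\kappa$, the chain rule gives $2\lambda\,\partial_y\lambda|_\kappa=\varepsilon$, hence $\partial_y|_\kappa=\partial_y|_\lambda+\frac{\varepsilon}{2\lambda}\partial_\lambda$. This is precisely the origin of the $\frac{\varepsilon}{2\lambda}\partial_\lambda$ terms in (\ref{zcr}): the $x$-derivative and the shift in $n$ commute with this combination, so the three compatibility conditions retain their form but with the plain $y$-derivative replaced by $\partial_y+\frac{\varepsilon}{2\lambda}\partial_\lambda$.

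Next I would compute the gauged matrices. Because $G$ depends only on $\lambda$ and not on $x,y,n$, pure conjugation governs $U_n$ and $W_n$: rewriting $\varepsilon y+\kappa$ as $\lambda^2$, one checks directly that $G\hat U_nG^{-1}$ and $G\hat W_nG^{-1}$ are exactly the $U_n$ and $W_n$ of (\ref{UVW}). The only term requiring care is $V_n$, which transforms as $\tilde V_n=G\hat V_nG^{-1}+\bigl(\partial_y+\frac{\varepsilon}{2\lambda}\partial_\lambda\bigr)G\cdot G^{-1}$. Here the derivative contribution is nonzero: since $\partial_\lambda G=\mathrm{diag}(1,0)$, one finds $\bigl(\frac{\varepsilon}{2\lambda}\partial_\lambda G\bigr)G^{-1}=\mathrm{diag}\bigl(\frac{\varepsilon}{2\lambda^2},0\bigr)$, which adds $\varepsilon/2$ to the top-left entry $\mu_n/\lambda^2$ coming from $G\hat V_nG^{-1}$. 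Using $\mu_n=\varepsilon n+\mu_0$ and hence $\mu_n+\varepsilon/2=\tfrac12(\mu_n+\mu_{n+1})$, this reproduces exactly the symmetric entry in (\ref{UVW}).

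The step I expect to be the main obstacle — and the only one that is not a pure conjugation — is precisely this $V_n$ term: one must verify that the extra summand produced by differentiating the $\lambda$-dependent gauge combines with the conjugated matrix to reconstitute the symmetric factor $\tfrac12(\mu_n+\mu_{n+1})$, rather than some other constant. Once this is confirmed, all three matrices coincide with (\ref{UVW}), and since the non-symmetric representation was equivalent to (\ref{TL}) and (\ref{ffp}), so is the transformed one, which proves the proposition. As an independent check, one could instead substitute (\ref{UVW}) directly into (\ref{zcr}) and expand entrywise, recovering the constraint (\ref{ffp}) from the diagonal and off-diagonal entries of the first and third equations and the lattice (\ref{TL}) from the remaining ones.
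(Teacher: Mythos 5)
Your proof is correct and takes essentially the same route as the paper: the paper obtains (\ref{zcr})--(\ref{UVW}) from the non-symmetric zero curvature representation displayed just before the proposition by introducing the variable spectral parameter $\lambda^2=\varepsilon y+\kappa$, exactly as you do, dismissing the rest as ``simple calculations''. Your explicit identification of the diagonal gauge $G=\mathrm{diag}(\lambda,1)$ and the verification that the extra term $\bigl(\tfrac{\varepsilon}{2\lambda}\partial_\lambda G\bigr)G^{-1}=\mathrm{diag}\bigl(\tfrac{\varepsilon}{2\lambda^2},0\bigr)$ converts $\mu_n$ into $\tfrac{1}{2}(\mu_n+\mu_{n+1})$ merely spells out what the paper leaves implicit.
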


\section{Partial differential systems}\label{s:MB}

Due to the constraint (\ref{ffp}), the Toda lattice (\ref{TL}) turns into a closed system for the variables $f=f_n$, $p=p_n$, $q=p_{n+1}$ and parameters $\mu=\mu_n$, $\nu=\mu_{n+1}$, for any $n$: 
\begin{equation}\label{fpq}
 f_y=p-q,\quad p_x=fp+pf-\mu,\quad q_x=-fq-qf+\nu
\end{equation}
(recall that $\mu$ and $\nu$ are understood as scalars multiplied by $1\in{\cal A}$). Denoting the shift action $n\mapsto n+1$ in the lattice with a tilde, we reformulate Theorem \ref{th:ffp} and Proposition \ref{pr:zcr} as the following statements, which can also be verified by direct calculations.

\begin{proposition}
The B\"acklund transformation
\begin{equation}\label{fpq.BT}
 \tilde p=q,\quad \tilde q=p+\nu q^{-1}q_yq^{-1},\quad \tilde f=-f+\nu q^{-1},\quad 
 \tilde\mu=\nu,\quad \tilde\nu=-\mu+2\nu
\end{equation} 
maps the solution $f,p,q$ of the system (\ref{fpq}) with parameters $\mu,\nu$ into the solution $\tilde f,\tilde p,\tilde q$ of the same system with parameters $\tilde\mu,\tilde\nu$.
\end{proposition}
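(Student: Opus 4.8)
The plan is to read the transformation (\ref{fpq.BT}) as the lattice shift $n\mapsto n+1$ written out in the closed variables $f=f_n$, $p=p_n$, $q=p_{n+1}$, $\mu=\mu_n$, $\nu=\mu_{n+1}$, so that each formula is forced by the constraint (\ref{ffp}) and the lattice (\ref{TL}). First I would record the algebraic entries: $\tilde p=p_{n+1}=q$ and $\tilde\mu=\mu_{n+1}=\nu$ are immediate, the rule $\tilde\nu=\mu_{n+2}=2\mu_{n+1}-\mu_n=2\nu-\mu$ is just linearity of $\mu_n=\varepsilon n+\mu_0$, and $\tilde f=f_{n+1}=\mu_{n+1}p_{n+1}^{-1}-f_n=-f+\nu q^{-1}$ is the constraint (\ref{ffp}) at level $n+1$. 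The one non-algebraic entry, $\tilde q=p_{n+2}$, I would obtain by eliminating $p_{n+2}$ through the first Toda equation at level $n+1$, namely $p_{n+2}=p_{n+1}-f_{n+1,y}$, and substituting $f_{n+1,y}=-f_y-\nu q^{-1}q_yq^{-1}$ together with $f_y=p-q$; this yields exactly $\tilde q=p+\nu q^{-1}q_yq^{-1}$.

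It then remains to check that $(\tilde f,\tilde p,\tilde q)$ solves (\ref{fpq}) with the shifted parameters, which amounts to the three equations of (\ref{fpq}) at level $n+1$. Two of them I expect to collapse at once. The equation $\tilde f_y=\tilde p-\tilde q$ becomes, after cancelling the common term $\nu q^{-1}q_yq^{-1}$, just the first equation $f_y=p-q$ of the original system. The equation $\tilde p_x=\tilde f\tilde p+\tilde p\tilde f-\tilde\mu$ reads $q_x=(-f+\nu q^{-1})q+q(-f+\nu q^{-1})-\nu$, and using $q^{-1}q=qq^{-1}=1$ and the centrality of $\nu$ the right-hand side simplifies to $-fq-qf+\nu$, which is precisely the third equation of (\ref{fpq}) for $q$. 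Neither of these involves cross-derivatives.

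The hard part will be the remaining equation $\tilde q_x=-\tilde f\tilde q-\tilde q\tilde f+\tilde\nu$, the only place where genuine computation is needed. Differentiating $\tilde q=p+\nu q^{-1}q_yq^{-1}$ in $x$ brings in the mixed derivative $q_{xy}$, which I would evaluate by differentiating $q_x=-fq-qf+\nu$ in $y$ and inserting $f_y=p-q$, giving $q_{xy}=-pq-qp+2q^2-fq_y-q_yf$. Substituting this, together with $p_x=fp+pf-\mu$ and $(q^{-1})_x=-q^{-1}q_xq^{-1}$, into $\tilde q_x$ and expanding the right-hand side $-\tilde f\tilde q-\tilde q\tilde f+(2\nu-\mu)$ is a lengthy but mechanical noncommutative calculation; the anticipated obstacle is purely the bookkeeping of the many non-commuting products, which have to cancel so as to reproduce that right-hand side exactly. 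In practice this verification is driven by the same consistency of (\ref{fx}) and (\ref{fy}) already proved in Theorem~\ref{th:ffp}, so the proposition may be regarded as a reformulation of that theorem that can, if desired, be confirmed by the direct calculation just outlined.
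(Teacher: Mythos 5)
Your proposal is correct and takes essentially the same route as the paper: the paper presents this proposition precisely as the lattice shift $n\mapsto n+1$ rewritten in the closed variables $f=f_n$, $p=p_n$, $q=p_{n+1}$ (i.e., as a reformulation of Theorem \ref{th:ffp} under the constraint (\ref{ffp})), adding only that it "can also be verified by direct calculations." Your sketch of that direct check does close as claimed: inserting $q_{xy}=-pq-qp+2q^2-fq_y-q_yf$ and $(q^{-1})_x=q^{-1}f+fq^{-1}-\nu q^{-2}$ into $\tilde q_x$ reproduces $-\tilde f\tilde q-\tilde q\tilde f+2\nu-\mu$ term by term.
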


\begin{proposition}\label{pr:fpq.zcr}
The system (\ref{fpq}) admits the representation
\begin{equation}\label{fpq.UV}
 U_y+\frac{\nu-\mu}{2\lambda}U_\lambda=V_x+[V,U],
\end{equation}
and its B\"acklund transformation (\ref{fpq.BT}) admits the representation
\begin{equation}\label{fpq.W}
 W_x=\widetilde UW-WU,\quad W_y+\frac{\nu-\mu}{2\lambda}W_\lambda=\widetilde VW-WV,
\end{equation}
where
\begin{equation}\label{fpq.UVW}
 U=\begin{pmatrix}
  f & \lambda\\
  \lambda & -f
 \end{pmatrix},\quad
 V=\frac{1}{\lambda}\begin{pmatrix}
   \frac{\mu+\nu}{2\lambda} & p\\
   q & 0
 \end{pmatrix},\quad
 W=\begin{pmatrix}
   0 & \lambda\\
   \lambda & -\nu q^{-1}
 \end{pmatrix}.
\end{equation}
\end{proposition}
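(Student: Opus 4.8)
The plan is to treat this as the frozen-index reformulation of Proposition~\ref{pr:zcr} and, following the paper's own remark, to confirm it by direct matrix algebra. First I would fix the dictionary $f=f_n$, $p=p_n$, $q=p_{n+1}$, $\mu=\mu_n$, $\nu=\mu_{n+1}$, so that $\varepsilon=\mu_{n+1}-\mu_n=\nu-\mu$, and read the tilde as the shift $n\mapsto n+1$; under this dictionary the matrices $U,V,W$ and $\widetilde U,\widetilde V$ in (\ref{fpq.UVW}) coincide with $U_n,V_n,W_n$ and $U_{n+1},V_{n+1}$ of (\ref{UVW}). The only entry needing reconciliation is the $(2,2)$-entry of $W$: in (\ref{UVW}) it is $-f_n-f_{n+1}$, while in (\ref{fpq.UVW}) it is $-\nu q^{-1}$. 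These agree because the shifted constraint (\ref{ffp}) reads $f_n+f_{n+1}=\mu_{n+1}p_{n+1}^{-1}=\nu q^{-1}$, which is exactly the relation $f+\tilde f=\nu q^{-1}$ built into the B\"acklund data (\ref{fpq.BT}). Granting this, the relations (\ref{fpq.UV})--(\ref{fpq.W}) are literally (\ref{zcr}) at a fixed index, so the statement descends from Proposition~\ref{pr:zcr}; but I would also verify it independently as follows.

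For (\ref{fpq.UV}) I would compute $U_y=\mathrm{diag}(f_y,-f_y)$, $U_\lambda=\left(\begin{smallmatrix}0&1\\1&0\end{smallmatrix}\right)$, and expand $V_x+[V,U]$. Matching the two diagonal entries returns $f_y=p-q$, while the $(1,2)$- and $(2,1)$-entries, after clearing the common factor $\lambda^{-1}$ and collecting the contribution of $\tfrac{\nu-\mu}{2\lambda}U_\lambda$ against the $\tfrac{\mu+\nu}{2\lambda^2}$-entry of $V$, reproduce exactly $p_x=fp+pf-\mu$ and $q_x=-fq-qf+\nu$. Thus (\ref{fpq.UV}) is equivalent to the system (\ref{fpq}).

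For the B\"acklund pair (\ref{fpq.W}) I would proceed the same way, entrywise. In $W_x=\widetilde UW-WU$ the left side has a single nonzero entry $-\nu(q^{-1})_x$ in position $(2,2)$; the off-diagonal entries of the right side vanish identically because of $f+\tilde f=\nu q^{-1}$, and the $(2,2)$-entries match upon writing $(q^{-1})_x=-q^{-1}q_xq^{-1}$ and substituting $q_x=-fq-qf+\nu$. In the second relation I would compute $W_y+\tfrac{\nu-\mu}{2\lambda}W_\lambda$, whose only nontrivial entries are the $\tfrac{\nu-\mu}{2\lambda}$ off-diagonal terms and $\nu q^{-1}q_yq^{-1}$ in position $(2,2)$, and match it against $\widetilde VW-WV$: the $(1,1)$-entry forces $\tilde p=q$, the off-diagonal entries force $\tilde\mu+\tilde\nu=3\nu-\mu$ (consistent with $\tilde\mu=\nu$, $\tilde\nu=-\mu+2\nu$), and the $(2,2)$-entry forces $\tilde q=p+\nu q^{-1}q_yq^{-1}$, that is, precisely the transformation (\ref{fpq.BT}).

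I do not expect any conceptual obstacle; the whole argument is dictated by Proposition~\ref{pr:zcr} together with the shifted constraint. The only point demanding care is the non-commutative bookkeeping: the scalars $\mu,\nu$ commute through every product, but the orderings in $fp+pf$, $fq+qf$, and $\nu q^{-1}f+\nu fq^{-1}$ must be preserved, and the rule $(q^{-1})'=-q^{-1}q'q^{-1}$ must be applied before any simplification. The structural point worth recording, rather than leaving it buried in the algebra, is that the matrix relations do not merely admit the data (\ref{fpq.BT}) but single it out: $\tilde p$, $\tilde q$ and the parameter shifts are the unique choices making (\ref{fpq.W}) hold.
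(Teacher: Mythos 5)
Your proof is correct and takes essentially the same route as the paper: the paper presents this proposition precisely as the frozen-index reformulation of Proposition~\ref{pr:zcr} (tilde $=$ shift $n\mapsto n+1$), ``which can also be verified by direct calculations'', and you carry out both steps, correctly handling the one nontrivial identification $f_n+f_{n+1}=\mu_{n+1}p_{n+1}^{-1}=\nu q^{-1}$ in the $(2,2)$-entry of $W$. One caveat on your closing remark only: the relations (\ref{fpq.W}) do single out $\tilde f$, $\tilde p$, $\tilde q$, but they fix only the combination $\tilde\mu+\tilde\nu=3\nu-\mu$ (since $\tilde\mu$, $\tilde\nu$ enter $\widetilde V$ only through their sum); the individual shifts $\tilde\mu=\nu$, $\tilde\nu=-\mu+2\nu$ are determined not by (\ref{fpq.W}) alone but by additionally requiring $\tilde f,\tilde p,\tilde q$ to solve (\ref{fpq}) with parameters $\tilde\mu,\tilde\nu$.
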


In the commutative case, the elimination of $p$ brings the system (\ref{fpq}) to the equation
\[
 ff_{xxy}=f_xf_{xy}+4f^3f_y+(\mu+\nu)f_x+2(\nu-\mu)f^2.
\]
For $\nu=\mu$, this is the so-called ``negative'' symmetry of the mKdV equation $f_t=f_{xxx}-6f^2f_x$. On the other hand, we can compare (\ref{fpq}) with the pumped Maxwell--Bloch system introduced in \cite{Burtsev_Zakharov_Mikhailov_1987}:
\[
 E_y=\rho,\quad \rho_x=NE,\quad 2N_x=-\rho^*E-\rho E^*+2c,\quad E,\rho\in{\mathbb C},~~ N\in{\mathbb R},
\]
where $c$ is the pumping parameter. For $\rho,E\in{\mathbb R}$, these equations are simplified to the system
\begin{equation}\label{PMB-real}
 E_y=\rho,\quad \rho_x=NE,\quad N_x=-\rho E+c
\end{equation}
studied in \cite{Winternitz_1992,Burtsev_1993,Schief_1994,Clarkson_Mansfield_Milne_1996} where it was shown that it admits a selfsimilar reduction to the P$_3$ equation. The system (\ref{PMB-real}) and the scalar system (\ref{fpq}) are related by the simple change
\[
 2f=iE,\quad 4p=N+i\rho,\quad 4q=N-i\rho,\quad c=4\mu=-4\nu.
\]
Thus, the system (\ref{fpq}) can be viewed as a non-Abelian generalization both for the negative mKdV flow with an additional parameter $\nu-\mu$ and for the real pumped Maxwell--Bloch system with an additional parameter $\nu+\mu$. 

The system (\ref{fpq}) with $\mu=\nu=0$
\begin{equation}\label{fpq0}
 f_y=p-q,\quad p_x=fp+pf,\quad q_x=-fq-qf
\end{equation}
is a degenerate case corresponding to the constraint (\ref{ffp}) with $\mu_n=0$, that is, to the 2-periodic boundary condition
\[
 f_{2n}=f,\quad f_{2n+1}=-f,\quad p_{2n}=p,\quad p_{2n+1}=q.
\]

\begin{remark}
The general 2-periodicity condition $f_{n+2}=f_n$, $p_{n+2}=p_n$ is also equivalent to (\ref{fpq0}). Indeed, we have $(f_n+f_{n+1})_y=p_n-p_{n+2}=0$, that is $f_n+f_{n+1}=a(x)$, and it is possible to set $a=0$ without loss of generality by the gauge transformation $f_n=A\tilde f_nA^{-1}+A_xA^{-1}$, $p_n=A\tilde p_nA^{-1}$ with $2A_xA^{-1}=a$.
\end{remark}

For the system (\ref{fpq0}), an additional constraint $pq=\beta(y)\in{\mathbb C}$ is possible due to the relation $(pq)_x=[f,pq]$; moreover, one can set $\beta=1$ by the change $(p,q,\partial_y)\to \beta^{1/2}(p,q,\partial_y)$. This brings to a more special reduction
\[
 f_{2n}=f,\quad f_{2n+1}=-f,\quad p_{2n}=p,\quad p_{2n+1}=p^{-1}
\]
and the non-Abelian sinh-Gordon equation \cite{Salle_1982} 
\begin{equation}\label{naSG}
 f_y=p-p^{-1},\quad p_x=fp+pf.
\end{equation}
The mapping (\ref{fpq.BT}) in this case turns into the trivial change $\tilde f=-f$, $\tilde p=p^{-1}$. The zero curvature representation for (\ref{naSG}) is of the form
\begin{equation}\label{fp.zcr}
 U_y-V_x=[V,U],\quad
 U=\begin{pmatrix}
  f & \lambda\\
  \lambda & -f
 \end{pmatrix},\quad
 V=\lambda^{-1}\begin{pmatrix}
   0 & p\\
  p^{-1} & 0
 \end{pmatrix}.
\end{equation}
Notice, that the systems (\ref{fpq0}) and (\ref{naSG}) with scalar variables are equivalent since the relation  $pq=\beta(y)$ is the first integral. In the non-Abelian case, this is only a partial first integral, so the system (\ref{fpq0}) is more general than (\ref{naSG}). Another difference is that the scalar system (\ref{naSG}) easily reduces to the sinh-Gordon equation in rational form
\begin{equation}\label{SGp}
 p_{xy}=\frac{p_xp_y}{p}+2p^2-2,
\end{equation} 
while in the non-Abelian case the elimination of $f$ only by use of non-commutative algebra operations is impossible.

\section{Sinh-Gordon equation and non-Abelian analog of P$^{(8)}_3$}\label{s:P30}

A self-similar reduction of the general system (\ref{fpq}) brings to a system of three first-order ODEs. In the following sections, we show that this system has a partial first integral (which did not exist before the reduction), which allows us to reduce the order and to obtain an analog of the P$_3$ equation with a full set of parameters. In this section, we start with a simpler case of the sinh-Gordon equation (\ref{naSG}), which is already second order.

Obviously, the scalar equation (\ref{SGp}) is invariant under the Lorentz group $(x,y)\mapsto(\varepsilon x,y/\varepsilon)$. Therefore, the self-similar substitution is possible
\[
 p(x,y)=p(z),\quad z=-2xy,
\]
which brings to equation (\ref{P3}) with parameters $\alpha=1$, $\beta=-1$, $\gamma=\delta=0$ for the variable $p(z)$:
\begin{equation}\label{P31100}
 p''= \frac{(p')^2}{p}-\frac{p'}{z}+\frac{p^2-1}{z}.
\end{equation}
This is the simplest and most well-studied case of P$_3$ known as P$^{(8)}_3$ equation (the classification of different cases of P$_3$ is given, for example, in \cite{Clarkson_2019}, see also \cite{Its_Novokshenov_1986, GLS, Fokas_Its_Kapaev_Novokshenov_2006}). 

In the non-Abelian case, the reduction is essentially just as simple, but it can be generalized slightly by adding the conjugation by elements of the form
\[
 y^a:=\exp(a\log y),\quad a\in\cal A,
\]
where $a$ is an arbitrary non-Abelian constant. An analog of equation (\ref{P31100}) is the system (\ref{fp0.z}) from the following Proposition.

\begin{proposition}
The non-Abelian sinh-Gordon equation (\ref{naSG}) admits the self-similar reduction
\[
 p(x,y)= y^ap(z)y^{-a},\quad f(x,y)= -2y^{1+a}f(z)y^{-a},\quad z=-2xy,\quad a\in\cal A,
\]
where $f(z)$ and $p(z)$ satisfy the equations
\begin{equation}\label{fp0.z}
 zf'= \frac{1}{2}(p-p^{-1})-f-[a,f],\quad p'=fp+pf.
\end{equation}
This system admits the isomonodromic Lax pair $A'=B_\zeta+[B,A]$ with the matrices
\[
 A=\begin{pmatrix}
  (a+zf)/\zeta & p/\zeta^2-z/2 \\
  p^{-1}/\zeta^2-z/2 & (a-zf)/\zeta 
 \end{pmatrix},\quad 
 B=\begin{pmatrix}
  f & -\zeta/2 \\
  -\zeta/2 & -f 
 \end{pmatrix}. 
\]
\end{proposition}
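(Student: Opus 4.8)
The plan is to verify the reduction by direct substitution, proceeding in three stages: first confirming that the self-similar ansatz is compatible with the sinh-Gordon equations (\ref{naSG}), then checking that the reduced ODEs (\ref{fp0.z}) emerge correctly, and finally verifying the isomonodromic Lax pair. For the first stage I would substitute $p(x,y)=y^ap(z)y^{-a}$ and $f(x,y)=-2y^{1+a}f(z)y^{-a}$ with $z=-2xy$ into the two equations of (\ref{naSG}). The key computational ingredient is the action of the derivations $\partial_x$ and $\partial_y$ on expressions of the form $y^a g(z) y^{-a}$. Since $z=-2xy$, we have $z_x=-2y$ and $z_y=-2x=z/y$, so $\partial_x$ acts as $-2y\,\partial_z$ on the $z$-dependence, while $\partial_y$ produces both a $\partial_z$ contribution through $z_y$ and a contribution from differentiating the conjugating factors $y^{\pm a}$ and the prefactor $y^{1+a}$. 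The essential identity is $\partial_y(y^a)=a y^{a-1}=y^{a-1}a$ (here $a$ is a \emph{constant} in $\cal A$, so it commutes with $y^a$ only through scalar powers of $y$, but commutes with $\log y$ and hence with $y^a$ itself), which is what generates the commutator term $[a,f]$ in the first equation of (\ref{fp0.z}).

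Concretely, for the second equation $p_x=fp+pf$, applying $\partial_x$ to $y^a p(z) y^{-a}$ gives $-2y\,y^a p'(z) y^{-a}$, while $fp+pf$ becomes $-2y^{1+a}(f p + p f)(z) y^{-a}$; the conjugation by $y^{\pm a}$ and the common factor $-2y$ cancel from both sides, leaving exactly $p'=fp+pf$. For the first equation $f_y=p-p^{-1}$, the computation is the one that carries the content: differentiating $f=-2y^{1+a}f(z)y^{-a}$ in $y$ produces a term $-2(1+a)y^a f y^{-a}$ from the explicit $y$-powers (yielding the $-f$ and $-[a,f]$ terms after rearrangement), plus a term $-2y^{1+a}f' y^{-a}\cdot z_y=-2y^{1+a}f'y^{-a}(z/y)$ from the chain rule, which supplies the $zf'$ term. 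Setting this equal to $p-p^{-1}=y^a(p-p^{-1})y^{-a}$ and stripping the conjugation and scalar factors should reproduce $zf'=\tfrac12(p-p^{-1})-f-[a,f]$; the factor of $\tfrac12$ and the precise bookkeeping of the powers of $y$ and the constant $-2$ in the prefactors is where care is needed, and I expect this to be the main obstacle—not because it is deep, but because the non-commutativity means one cannot freely reorder $a$ past $f$, so every application of the Leibniz rule must respect the ordering of the factors $y^{1+a}$, $f(z)$, and $y^{-a}$.

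For the Lax pair, the strategy is to start from the zero curvature representation (\ref{fp.zcr}) for (\ref{naSG}) and perform the same self-similar substitution together with the change of spectral parameter that converts $\lambda$ into the new variable $\zeta$. Under a self-similar reduction, a zero curvature pair $U_y-V_x=[V,U]$ with a spectral parameter $\lambda$ tied to the scaling variable collapses to an isomonodromic pair $A_\zeta - B' + [B,A]=0$ (equivalently $A'=B_\zeta+[B,A]$ as stated), where $B$ is essentially the reduced $U$ and $A$ absorbs both $V$ and the $\lambda\partial_\lambda$ term generated by the scaling. I would posit the matrices $A$ and $B$ as given and verify the single matrix equation $A'=B_\zeta+[B,A]$ by direct computation, using the already-established ODEs (\ref{fp0.z}) to simplify. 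The entries of $A$ contain the explicit terms $a/\zeta$ and $-z/2$, which are precisely the fingerprints of the conjugation constant $a$ and the self-similar variable $z$; checking the $(1,1)$ and $(2,2)$ entries will produce the $zf'$ equation (with its $[a,f]$ term), while the off-diagonal entries will reproduce $p'=fp+pf$ and its inverse counterpart. Since (\ref{fp0.z}) is assumed already verified, this final step reduces to a routine but careful matrix multiplication, and I would present it as a direct calculation rather than deriving $A$ and $B$ from scratch.
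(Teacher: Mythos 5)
Your plan follows the paper's route for the ODE part (the paper simply says that equations (\ref{fp0.z}) are ``obtained straightforwardly'' by the very substitution you describe), and for the Lax pair you verify the printed $A$, $B$ directly instead of deriving them from (\ref{fp.zcr}) via the change $\lambda=y\zeta$ as the paper does. That verification route is legitimate: the printed matrices do satisfy $A'=B_\zeta+[B,A]$ exactly on solutions of (\ref{fp0.z}), the $(1,1)$ and $(2,2)$ entries reproducing $zf'=\tfrac12(p-p^{-1})-f-[a,f]$ and the off-diagonal entries reproducing $p'=fp+pf$ together with its consequence for $(p^{-1})'$.

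The gap sits at the single step you flagged as ``where care is needed'' and then did not perform. Carrying out your own computation with the ansatz as printed gives
\[
 f_y=-2y^{a}\bigl(zf'+f+[a,f]\bigr)y^{-a},\qquad p-p^{-1}=y^{a}\bigl(p-p^{-1}\bigr)y^{-a},
\]
so stripping the conjugation and scalar factors yields
\[
 zf'=-\tfrac12\,(p-p^{-1})-f-[a,f],
\]
with a \emph{minus} sign in front of $\tfrac12(p-p^{-1})$: this is (\ref{fp0.z}) with $p$ replaced by $-p$, not (\ref{fp0.z}) itself. (The second equation $p'=fp+pf$ does come out as you say; only the $f_y$ equation is affected.) Hence the verification you assert ``should reproduce'' the stated system does not close for the substitution as written; the ansatz and the system are mutually off by a sign, and a complete proof must detect and repair this, e.g.\ by taking $p(x,y)=-y^{a}p(z)y^{-a}$, or equivalently $z=2xy$ with $f(x,y)=2y^{1+a}f(z)y^{-a}$. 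The corrected sign is in fact visible in the paper's own proof, where $V=-\zeta^{-1}y^{-1+a}\widetilde V y^{-a}$ carries an explicit minus that the printed ansatz $p=+y^{a}p(z)y^{-a}$ would not produce; and the scalar check confirms that (\ref{fp0.z}) as printed---not its sign-flip---is what reduces to (\ref{P31100}) and what the printed Lax pair encodes. Deferring exactly this piece of bookkeeping while asserting its outcome is where the proposal falls short of a proof.
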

\begin{proof}
Equations (\ref{fp0.z}) are obtained straightforwardly. In order to obtain the matrices $A$ and $B$, we apply an additional change of the spectral parameter $\lambda=y\zeta$, then the matrices (\ref{fp.zcr}) take the form
\[
 U= y^{1+a}\widetilde Uy^{-a},\quad
 \widetilde U = \begin{pmatrix}
   -2f & \zeta \\
   \zeta & 2f 
  \end{pmatrix},\quad
 V= -\zeta^{-1}y^{-1+a}\widetilde Vy^{-a},\quad
 \widetilde V = \begin{pmatrix}
   0 & p \\
   p^{-1} & 0 
  \end{pmatrix},
\] 
and the derivatives are replaced according to the rule
\[
 \partial_x \to -2y\partial_z,\quad
 \partial_y \to \partial_y+\frac{z}{y}\partial_z-\frac{\zeta}{y}\partial_\zeta.
\]
Then the dependence on $y$ in the equation $U_y=V_x+[V,U]$ is canceled out and it takes the form
\[
 (z\widetilde U-2\zeta^{-1}\widetilde V)' = \zeta\widetilde U_\zeta + [\widetilde U,\zeta^{-1}\widetilde V+a].
\] 
The above Lax pair appears as a result of the changes $-2B=\widetilde U$ and $A=\zeta^{-2}\widetilde V+\zeta^{-1}(zB+a)$.
\end{proof}

\section{Self-similar reduction of system (\ref{fpq})}\label{s:P3}

\subsection{Third order ODE system}

The scaling group for the system (\ref{fpq}) is different from the group for the sinh-Gordon equation: for (\ref{naSG}) the homogeneity weights are $\rho(\partial_x)=-\rho(\partial_y)=\rho(f)=1$ and $\rho(p)=0$, while for (\ref{fpq}), as it is easy to see,
\[
 \rho(\partial_x)=1,\quad \rho(\partial_y)=-2,\quad \rho(f)=1,\quad \rho(p)=\rho(q)=-1.
\]
Accordingly, an independent self-similar variable should be $z=xy^{1/2}$ rather than $xy$ and we arrive at the self-similar substitution
\begin{equation}\label{fpq.sub}
\begin{gathered}
 f(x,y)= y^{1/2-a}f(z)y^a,\quad p(x,y)= y^{-1/2-a}p(z)y^a,\\
 q(x,y)= y^{-1/2-a}q(z)y^a,\quad z=xy^{1/2},\quad a\in{\cal A}.
\end{gathered}  
\end{equation}

\begin{proposition}
The reduction (\ref{fpq.sub}) in equations (\ref{fpq}) brings to the system
\begin{equation}\label{fpq.z}
 (zf)'= 2p-2q+2[a,f],\quad p'= fp+pf-\mu,\quad q'= -fq-qf+\nu,
\end{equation}
which is invariant under the B\"acklund transformation
\begin{equation}\label{fpq.zBT}
\begin{gathered}
 \tilde p=q,\quad 
 \tilde q=p-\frac{\nu z}{2}(fq^{-1}+q^{-1}f)-\frac{\nu}{2}q^{-1}+\frac{\nu^2z}{2}q^{-2}+\nu[a,q^{-1}],\\
 \tilde f=-f+\nu q^{-1},\quad 
 \tilde\mu=\nu,\quad \tilde\nu=-\mu+2\nu.
\end{gathered}
\end{equation}
\end{proposition}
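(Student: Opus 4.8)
The plan is to treat the two assertions separately: first that the ansatz (\ref{fpq.sub}) reduces (\ref{fpq}) to the ODE system (\ref{fpq.z}), and then that (\ref{fpq.zBT}) is a symmetry of (\ref{fpq.z}). For the reduction I would substitute (\ref{fpq.sub}) directly, using the rules that follow from $z=xy^{1/2}$, namely $z_x=y^{1/2}$ and $z_y=z/(2y)$, together with $\partial_y y^{c}=\tfrac{1}{y}\,c\,y^{c}$ for a constant $c\in\mathcal A$. The key observation is that $\partial_x$ applied to $p=y^{-1/2-a}p(z)y^{a}$ pulls down the factor $z_x=y^{1/2}$, which exactly cancels the weight and leaves $\partial_x p=y^{-a}p'(z)y^{a}$; since the conjugating factors $y^{\mp a}$ pass through the algebraic right-hand side $fp+pf-\mu$ unchanged, the second and third equations of (\ref{fpq}) collapse at once to $p'=fp+pf-\mu$ and $q'=-fq-qf+\nu$.

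The only equation requiring care is $f_y=p-q$. Writing $f=y^{1/2-a}f(z)y^{a}$ and differentiating, the three contributions (from the left factor $y^{1/2-a}$, from $z_y$, and from the right factor $y^{a}$) combine, after pulling out $y^{-1/2-a}(\,\cdot\,)y^{a}$, into
\[
 f_y=y^{-1/2-a}\Bigl(\tfrac{1}{2} f-[a,f]+\tfrac{z}{2} f'\Bigr)y^{a},
\]
where the term $-[a,f]$ is precisely the commutator generated by moving $a$ past $f(z)$. Matching this against $p-q=y^{-1/2-a}(p-q)y^{a}$ and using $zf'+f=(zf)'$ gives $(zf)'=2p-2q+2[a,f]$, the first equation of (\ref{fpq.z}). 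This settles the reduction.

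For the invariance under (\ref{fpq.zBT}) I would avoid a blind substitution into (\ref{fpq.z}) and instead use that (\ref{fpq.zBT}) is nothing but the self-similar reduction of the already-established PDE B\"acklund transformation (\ref{fpq.BT}). Since the latter maps solutions of (\ref{fpq}) with parameters $\mu,\nu$ to solutions with $\tilde\mu,\tilde\nu$, and since the self-similar weights and the element $a$ are unaffected by the parameter shift, it suffices to check that feeding the ansatz (\ref{fpq.sub}) into the three components of (\ref{fpq.BT}) reproduces (\ref{fpq.zBT}) on the level of profiles. The components $\tilde p=q$ and $\tilde f=-f+\nu q^{-1}$ are immediate once one checks that $q^{-1}=y^{1/2-a}q(z)^{-1}y^{a}$ and $-f$ both carry weight $+1/2$, so they fit the ansatz for $\tilde f$ with profile $-f(z)+\nu q(z)^{-1}$. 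The substantive computation is the term $\nu q^{-1}q_y q^{-1}$: reducing $q_y$ exactly as above yields the profile $q^{-1}\bigl(-\tfrac{1}{2} q-[a,q]+\tfrac{z}{2} q'\bigr)q^{-1}$, and I would then eliminate $q'$ using the ODE $q'=-fq-qf+\nu$, together with the identities $q^{-1}[a,q]q^{-1}=-[a,q^{-1}]$ and $q^{-1}q'q^{-1}=-q^{-1}f-fq^{-1}+\nu q^{-2}$; collecting terms produces exactly the expression for $\tilde q$ in (\ref{fpq.zBT}).

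I expect the bookkeeping of the non-commuting conjugation factors to be the main obstacle. The delicate points are keeping the correct left/right placement of $a$, $y^{\pm a}$ and the profiles when differentiating in $y$ (so that the conjugation generates precisely $[a,\cdot]$ and nothing else), and ensuring that the $z q'$ contribution, once the ODE is used, assembles into the symmetric combination $-\tfrac{\nu z}{2}(fq^{-1}+q^{-1}f)+\tfrac{\nu^2 z}{2}q^{-2}$ rather than a lopsided one. Once the reduction of (\ref{fpq.BT}) is confirmed to land on (\ref{fpq.zBT}), the invariance of (\ref{fpq.z}) follows formally: because passing from a PDE solution to its profile is an equivalence, the image profiles automatically solve (\ref{fpq.z}) with the shifted parameters, and no separate verification against the ODE system is required.
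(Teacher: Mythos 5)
Your proposal is correct and takes essentially the same route as the paper: the system (\ref{fpq.z}) is obtained by direct substitution of the ansatz (\ref{fpq.sub}), and the transformation (\ref{fpq.zBT}) together with its invariance property is obtained by reducing the PDE B\"acklund transformation (\ref{fpq.BT}), using that it preserves the self-similar (homogeneous) form so that no separate verification against the ODE system is needed. Your explicit bookkeeping of the conjugation factors $y^{\pm a}$, the commutator terms $[a,\cdot]$, and the elimination of $q'$ via the ODE simply fills in what the paper dismisses as a straightforward computation, and it is accurate.
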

\begin{proof}
Equations (\ref{fpq.z}) are obtained straightforwardly. The transformation (\ref{fpq.BT}) preserves the homogeneity with respect to the above weights. Therefore, if a solution $f(x,y)$, $p(x,y)$, $q(x,y)$ of the system (\ref{fpq}) possesses the self-similar structure (\ref{fpq.sub}), then this is true also for the new solution $\tilde f(x,y)$, $\tilde p(x,y)$, $\tilde q(x,y)$. Hence, it is also described by a system of the form (\ref{fpq.z}) and we only have to rewrite equations (\ref{fpq.BT}) under the reduction (\ref{fpq.sub}).
\end{proof}

Applying the reduction to equations (\ref{fpq.UV}) and (\ref{fpq.W}), we obtain the following proposition. Note that the representations (\ref{fpq.AB}), (\ref{fpq.K}) can be brought to the standard form with unit coefficient at $\partial_\zeta$ by dividing $A$ by $\zeta^2-\nu+\mu$.

\begin{proposition}
The system (\ref{fpq.z}) admits the isomonodromic Lax pair
\begin{equation}\label{fpq.AB}
  A'=(\zeta^2-\nu+\mu)B_\zeta+[B,A]
\end{equation}
and the transformation (\ref{fpq.zBT}) is equivalent to the equations
\begin{equation}\label{fpq.K}
 K'=\widetilde BK-KB,\quad (\zeta^2-\nu+\mu)K_\zeta=\widetilde AK-KA
\end{equation}
with the matrices
\begin{equation}\label{fpq.ABK}
\begin{gathered}
 A=\begin{pmatrix}
  \zeta zf -2\zeta a-\frac{\mu+\nu}{\zeta}+\zeta\kappa & \zeta^2z-2p\\
  \zeta^2z-2q & -\zeta zf -2\zeta a+\zeta\kappa 
 \end{pmatrix},\\
 B=\begin{pmatrix}
  f & \zeta \\
  \zeta & -f 
 \end{pmatrix},\quad 
 K=\begin{pmatrix}
   0 & \zeta \\
   \zeta & -\nu q^{-1} 
  \end{pmatrix}, 
\end{gathered}
\end{equation}
where $\kappa$ in $A$ is an additional scalar parameter such that $\tilde\kappa=\kappa+1$.
\end{proposition}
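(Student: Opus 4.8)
The plan is to repeat, for the full system, the reduction already carried out for the sinh--Gordon equation in the preceding section, this time starting from the zero-curvature representation (\ref{fpq.UV}) and its B\"acklund companion (\ref{fpq.W}) supplied by Proposition \ref{pr:fpq.zcr}. First I would substitute the self-similar ansatz (\ref{fpq.sub}) into the matrices (\ref{fpq.UVW}) and rescale the spectral parameter. Since $\lambda$ sits alongside $f$ in $U$ we have $\rho(\lambda)=\rho(f)=1$, and the combination invariant under the scaling group (under which $y\mapsto\varepsilon^2y$, $\lambda\mapsto\varepsilon\lambda$) is $\zeta=y^{-1/2}\lambda$; so the correct substitution is $\lambda=y^{1/2}\zeta$, the analog of $\lambda=y\zeta$ used before. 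With this choice I expect the three matrices to acquire the uniform conjugated structure
\[
 U=y^{1/2}y^{-a}By^a,\quad V=\tfrac1y\,y^{-a}\widehat Vy^a,\quad W=y^{1/2}y^{-a}Ky^a,\quad
 \widehat V=\frac1\zeta\begin{pmatrix}\frac{\mu+\nu}{2\zeta}&p\\ q&0\end{pmatrix},
\]
where $B$ and $K$ are precisely the matrices in (\ref{fpq.ABK}) and $B,\widehat V,K$ depend on $z,\zeta$ alone.

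Next I would convert the derivatives. From $z=xy^{1/2}$ and $\zeta=y^{-1/2}\lambda$ one gets
\[
 \partial_x\to y^{1/2}\partial_z,\quad \partial_y\to\partial_y+\frac{z}{2y}\partial_z-\frac{\zeta}{2y}\partial_\zeta,\quad \partial_\lambda\to y^{-1/2}\partial_\zeta,
\]
and, inserting these into (\ref{fpq.UV}) and using that $a$ commutes with $y^{\pm a}$, I expect every term to carry the common factor $\tfrac1y\,y^{1/2}y^{-a}(\cdots)y^a$; the explicit powers of $y$ then cancel, and the $y$-differentiation of $y^{-a}$ and $y^a$ contributes exactly a commutator $[B,a]$. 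After removing the common factor, (\ref{fpq.UV}) should collapse to a single relation between $B$, $\widehat V$, their $z$- and $\zeta$-derivatives, and $a$.

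The third step is to recognize that relation as (\ref{fpq.AB}). I would set $A=\zeta zB-2\zeta\widehat V-2\zeta a+\zeta\kappa$, whose four pieces reproduce the four groups of entries of the matrix $A$ in (\ref{fpq.ABK}), and then balance $A'=\zeta B+\zeta zB_z-2\zeta\widehat V_z$ against $(\zeta^2-\nu+\mu)B_\zeta+[B,A]$; using $\tfrac{\zeta^2-\nu+\mu}{2\zeta}=\tfrac\zeta2-\tfrac{\nu-\mu}{2\zeta}$ this should reduce identically to the relation from the previous step. The scalar $\kappa$ enters for free: it is central and $z$-independent, so it contributes neither to $A'$ nor to $[B,A]$, which is the familiar normalization freedom of an isomonodromic $A$.

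The B\"acklund part I would treat identically, starting from (\ref{fpq.W}). The equation $W_x=\widetilde UW-WU$ reduces at once, after cancelling the factor $y\,y^{-a}(\cdots)y^a$, to $K'=\widetilde BK-KB$. The second equation reduces, by the same bookkeeping as for $U$, to $\tfrac12K+\tfrac z2K_z-\tfrac\zeta2K_\zeta+[K,a]+\tfrac{\nu-\mu}{2\zeta}K_\zeta=\widetilde{\widehat V}K-K\widehat V$. Substituting $A$ and its transform $\widetilde A$ (carrying $\kappa$ and $\tilde\kappa$ respectively) into $\widetilde AK-KA$, and using both $K'=\widetilde BK-KB$ and this last identity, I expect the $z$-derivative terms and the $[\,\cdot\,,a]$ terms to cancel, leaving $(\zeta^2-\nu+\mu)K_\zeta+\zeta(\tilde\kappa-\kappa-1)K$. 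This residual term is the crux and the one genuinely new feature here: the second equation of (\ref{fpq.K}) holds precisely when $\tilde\kappa-\kappa-1=0$, which is the asserted relation $\tilde\kappa=\kappa+1$. The only places demanding care are the correct exponent in $\lambda=y^{1/2}\zeta$ (so the explicit $y$-dependence truly cancels) and the use of one and the same $a$ in $A$ and $\widetilde A$ (which is what makes the $[\,\cdot\,,a]$ contributions drop); note finally that $\tilde\nu-\tilde\mu=\nu-\mu$, so the coefficient of the spectral-parameter flow is preserved, as the two Lax pairs must share it.
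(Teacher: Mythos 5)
Your proposal is correct and follows essentially the same route as the paper's own proof: extending the substitution (\ref{fpq.sub}) by $\lambda=y^{1/2}\zeta$ so that $U=y^{1/2-a}By^a$, $V=y^{-1-a}Cy^a$, $W=y^{1/2-a}Ky^a$, reducing (\ref{fpq.UV}) to $\zeta(zB-2C)'=(\zeta^2-\nu+\mu)B_\zeta-[B,2\zeta(C+a)]$, defining $A=\zeta(zB-2C-2a+\kappa)$, and then observing that the first equation of (\ref{fpq.K}) is automatic while the second one, after using $K'=\widetilde BK-KB$, leaves exactly the residual term $\zeta(\tilde\kappa-\kappa-1)K$, forcing $\tilde\kappa=\kappa+1$. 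Incidentally, your derivative rule $\partial_x\to y^{1/2}\partial_z$ is the correct one for $z=xy^{1/2}$ (the paper's printed rule $\partial_x\to y^{-1/2}\partial_z$ is a typo, as the cancellation of the powers of $y$ confirms).
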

\begin{proof}
We extend the substitution (\ref{fpq.sub}) with the relation $\lambda=y^{1/2}\zeta$, then $y$ in the matrices (\ref{fpq.UVW}) is separated out:
\[
 U= y^{1/2-a}By^a,\quad V= y^{-1-a}Cy^a,\quad W=y^{1/2-a}Ky^a,\quad
 C=\frac{1}{\zeta} 
  \begin{pmatrix}
   \frac{\mu+\nu}{2\zeta} & p \\
    q & 0 
  \end{pmatrix}.
\]
The derivatives are replaced according to the rule
\[
 \partial_x \to y^{-1/2}\partial_z,\quad
 \partial_y\to\partial_y +\frac{z}{2y}\partial_z -\frac{\zeta}{2y}\partial_\zeta,\quad
 \partial_\lambda\to y^{-1/2}\partial_\zeta.
\]
As a result, equation (\ref{fpq.UV}) transforms to
\[
 \zeta(zB-2C)'= (\zeta^2-\nu+\mu)B_\zeta -[B,2\zeta(C+a)].
\]
In order to bring this relation to the form (\ref{fpq.AB}) we only have to denote $A=\zeta(zB-2C-2a+\kappa)$. Here $\kappa$ is an arbitrary parameter, since it cancels out in (\ref{fpq.AB}). However, this additional term is neccessary in order to obtain the consistent representations (\ref{fpq.K}) for the B\"acklund transform. The equation for $K'$ is obtained from the first equation (\ref{fpq.W}) automatically, while the second equation yields
\[
 \frac{1}{2}(zK)'-[a,K] -\left(\frac{\zeta}{2}-\frac{\nu-\mu}{2\zeta}\right)K_\zeta =\widetilde CK-KC.
\]
After replacing $K'=\widetilde BK-KB$, this equation is reduced to the form
\[
 (\zeta^2-\nu+\mu)K_\zeta-\zeta K= (\widetilde A-\zeta\tilde\kappa)K-K(A-\zeta\kappa),
\]
which coincides with the second equation (\ref{fpq.K}) under the choice $\tilde\kappa=\kappa+1$.
\end{proof}

In the scalar case, the order of the system (\ref{fpq.z}) can be reduced by use of a first integral. We will demonstrate that in the non-Abelian case this is possible due to a partial first integral, that is, equations (\ref{fpq.z}) can be restricted to some invariant submanifold $J=0$ which is also preserved under the B\"acklund transformation (\ref{fpq.zBT}). This leads to a second order system, which is a non-Abelian analog of P$_3$. We study the cases of $\nu=\mu$ and $\nu\ne\mu$ separately (this corresponds to the constraint (\ref{ffp}) with $\varepsilon=0$ and $\varepsilon\ne0$).  

\subsection{The case $\nu=\mu$: a non-Abelian analog of P$^{(7)}_3$}\label{s:numu}

\begin{proposition}
The system (\ref{fpq.z}) with $\nu=\mu\ne0$ admits the invariant submanifold
\begin{equation}\label{fpq.J0}
 J(\kappa)= 2pq-\mu(zf-2a-\kappa)=0,
\end{equation}
which is mapped to the submanifold $\tilde J(\tilde\kappa)=0$ with $\tilde\kappa=\kappa+1$ under the transformation (\ref{fpq.zBT}).
\end{proposition}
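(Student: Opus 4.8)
The plan is to verify two independent claims: that the locus $J(\kappa)=0$ from (\ref{fpq.J0}) is preserved by the flow (\ref{fpq.z}), and that the Bäcklund transformation (\ref{fpq.zBT}) sends $J(\kappa)=0$ to $\tilde J(\kappa+1)=0$. For the first, I would differentiate $J=2pq-\mu(zf-2a-\kappa)$ along the system and show the result is a linear homogeneous expression in $J$, so that $J=0$ is an invariant of the ODE. For the second, I would substitute the transformation formulas (with $\nu=\mu$, so that $\tilde\mu=\tilde\nu=\mu$) directly into $\tilde J(\kappa+1)$ and show it collapses to a conjugate of $J(\kappa)$.

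For the flow, I first compute $(pq)'$ using $p'=fp+pf-\mu$ and (since $\nu=\mu$) $q'=-fq-qf+\mu$. The cross terms $pfq$ cancel, giving $(pq)'=[f,pq]+\mu(p-q)$. Combining this with $\mu(zf)'=2\mu(p-q)+2\mu[a,f]$ from the first equation of (\ref{fpq.z}), the contributions $2\mu(p-q)$ cancel and I am left with $J'=2[f,pq]-2\mu[a,f]$. Rewriting $2pq=J+\mu zf-2\mu a-\mu\kappa$ and using $[f,zf]=0$, $[f,\kappa]=0$ and $[f,-2\mu a]=2\mu[a,f]$, the two $[a,f]$ terms cancel and the clean identity
\[
 J'=[f,J]
\]
emerges. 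This is a linear homogeneous ODE for $J(z)$ in the algebra; hence $J=0$ at one point forces $J\equiv0$, so $J=0$ is an invariant submanifold of the flow.

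For the Bäcklund transformation I would substitute $\tilde p=q$, $\tilde f=-f+\mu q^{-1}$ and the given expression for $\tilde q$ into $\tilde J(\kappa+1)=2\tilde p\tilde q-\mu(z\tilde f-2a-\kappa-1)$. Expanding $2q\tilde q$ term by term and adding $-\mu z\tilde f=\mu zf-\mu^2zq^{-1}$, the bookkeeping is arranged so that several pairs cancel: the $\pm\mu$ constants (from $-\tfrac{\mu}{2}q^{-1}$ and from $\mu(\kappa+1)$), the $\pm\mu^2zq^{-1}$ terms, the $\mp 2\mu a$ terms, and the two $\mu zf$ contributions. What survives is $2qp-\mu z\,qfq^{-1}+2\mu\,qaq^{-1}+\mu\kappa$, which is exactly $q(2pq-\mu zf+2\mu a+\mu\kappa)q^{-1}=qJ(\kappa)q^{-1}$. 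Thus $\tilde J(\kappa+1)=qJ(\kappa)q^{-1}$, and in particular $J(\kappa)=0$ implies $\tilde J(\kappa+1)=0$.

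The flow computation is routine once one exploits $[f,f]=0$. The only genuine subtlety is in the Bäcklund step: one must recognize that the invariant reappears \emph{conjugated}, $\tilde J=qJq^{-1}$, rather than literally preserved, and that the shift $\kappa\mapsto\kappa+1$ is precisely what makes the scalar constant terms match. Carefully tracking the noncommuting factors — in particular the conjugation by $q$ generated by the $q^{-1}$-dependent terms in $\tilde q$ — is where all the effort sits; no global first integral or additional structure is needed.
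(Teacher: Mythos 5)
Your proof is correct and follows essentially the same route as the paper: you derive the Lax-type identity $J'=[f,J]$ for the flow and establish the conjugation relation $\tilde J(\kappa+1)=qJ(\kappa)q^{-1}$ for the B\"acklund transformation, which are exactly the two facts the paper's proof rests on. The only difference is that you write out explicitly the substitution check that the paper dismisses as ``easy to verify in a similar way''; your bookkeeping there is accurate.
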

\begin{proof}
The partial first integral (\ref{fpq.J0}) is easily derived: we have
\[
 (2pq)'=2[f,pq]+2\mu(p-q)=2[f,pq]+\mu(zf)'+2\mu[f,a] \quad\Rightarrow\quad J'=[f,J],
\]
which implies the invariance of the equation $J=0$. In a similar way, it is easy to check that $\tilde J(\tilde\kappa)=qJ(\kappa)q^{-1}$, which proves the invariance with respect to the B\"acklund transformation.
\end{proof}

On the level set $J=0$, we have $2q=\mu p^{-1}(zf-2a-\kappa)$ and (\ref{fpq.z}) turns into a second order system. A direct calculations bring to the following formulas.

\begin{proposition}
The system
\begin{equation}\label{fp1.z}
 (zf)'= 2p-\mu p^{-1}(zf-2a-\kappa)+2[a,f],\quad p'= fp+pf-\mu
\end{equation}
admits the B\"acklund transformation
\[
 \tilde p=\frac{\mu}{2}p^{-1}(zf-2a-\kappa),\quad 
 \tilde f=-f+\mu\tilde p^{-1},\quad
 \tilde\mu=\mu,\quad \tilde\kappa=\kappa+1
\]
and is equivalent to the Lax representation $A'=\zeta^2B_\zeta+[B,A]$, where
\[
 A=\begin{pmatrix}
  \zeta zf -2\zeta a-\frac{2\mu}{\zeta}+\zeta\kappa & \zeta^2z-2p\\
  \zeta^2z-\mu p^{-1}(zf-2a-\kappa) & -\zeta zf -2\zeta a +\zeta\kappa
 \end{pmatrix},\quad 
 B=
 \begin{pmatrix}
  f & \zeta \\
  \zeta & -f 
 \end{pmatrix}.
\]
\end{proposition}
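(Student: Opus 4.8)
The plan is to treat every part of this Proposition as a specialization of the already-established general results for system (\ref{fpq.z}) to the invariant submanifold $J(\kappa)=0$ of (\ref{fpq.J0}). The single computational device throughout is the substitution $2q=\mu p^{-1}(zf-2a-\kappa)$, which solves $J(\kappa)=0$ for $q$; since that submanifold was shown to be invariant under both the flow and the B\"acklund transformation, I do not expect to derive any genuinely new relation, only to rewrite known ones in the reduced variables $(f,p)$.

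First I would derive the system (\ref{fp1.z}). The equation $p'=fp+pf-\mu$ is carried over verbatim from (\ref{fpq.z}) with $\nu=\mu$. For the first equation I substitute $2q=\mu p^{-1}(zf-2a-\kappa)$ into $(zf)'=2p-2q+2[a,f]$, which gives $(zf)'=2p-\mu p^{-1}(zf-2a-\kappa)+2[a,f]$ at once. The only point requiring care is that the eliminated equation $q'=-fq-qf+\mu$ must be a consequence of the reduced pair; this is precisely the identity $J'=[f,J]$ established in deriving (\ref{fpq.J0}), so on $J=0$ differentiating the constraint with the $(f,p)$-equations automatically reproduces the $q$-equation. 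Thus (\ref{fp1.z}) is equivalent to (\ref{fpq.z}) restricted to $J(\kappa)=0$, with no extra consistency check needed.

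Next I would read off the B\"acklund transformation by setting $\nu=\mu$ in (\ref{fpq.zBT}) and using $\tilde p=q$ together with the constraint. This turns $\tilde p=q$ into $\tilde p=\tfrac{\mu}{2}p^{-1}(zf-2a-\kappa)$, while $\tilde f=-f+\nu q^{-1}$ becomes $\tilde f=-f+\mu\tilde p^{-1}$; the parameter rules give $\tilde\mu=\nu=\mu$ and $\tilde\kappa=\kappa+1$, the latter being exactly the shift under which $J(\kappa)=0$ was shown to map to $\tilde J(\tilde\kappa)=0$. The formula for $\tilde q$ in (\ref{fpq.zBT}) need not be recorded in the reduced system: since the image again lies on $\tilde J(\tilde\kappa)=0$, the new $\tilde q$ is automatically recovered as $\tfrac{\mu}{2}\tilde p^{-1}(z\tilde f-2a-\tilde\kappa)$, and the earlier relation $\tilde J(\tilde\kappa)=qJ(\kappa)q^{-1}$ guarantees consistency. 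So the reduced transformation maps solutions of (\ref{fp1.z}) with $(\mu,\kappa)$ to solutions with $(\mu,\kappa+1)$.

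Finally I would obtain the Lax pair by specializing (\ref{fpq.AB})--(\ref{fpq.ABK}). Putting $\nu=\mu$ collapses the coefficient $\zeta^2-\nu+\mu$ to $\zeta^2$ and the entry $-(\mu+\nu)/\zeta$ to $-2\mu/\zeta$; substituting the constraint into the $(2,1)$ entry replaces $\zeta^2 z-2q$ by $\zeta^2 z-\mu p^{-1}(zf-2a-\kappa)$, producing exactly the stated $A$, while $B$ is unchanged. Since the zero-curvature identity $A'=(\zeta^2-\nu+\mu)B_\zeta+[B,A]$ holds for any solution of (\ref{fpq.z}), it continues to hold for solutions on $J=0$, and in the reduced variables it reads $A'=\zeta^2 B_\zeta+[B,A]$. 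The main (and essentially only) obstacle I anticipate is a bookkeeping one: ensuring that the $q$-dependence hidden in the $(2,1)$ entry is differentiated consistently, which once more rests on $J'=[f,J]$. There is no independent verification left to perform, so the whole proof reduces to careful substitution.
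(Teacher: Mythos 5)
Your proposal is correct and follows essentially the same route as the paper: the paper likewise obtains this Proposition by direct substitution of the constraint $2q=\mu p^{-1}(zf-2a-\kappa)$ (solving $J(\kappa)=0$) into the general system (\ref{fpq.z}), its B\"acklund transformation (\ref{fpq.zBT}) with $\nu=\mu$, and the Lax data (\ref{fpq.AB})--(\ref{fpq.ABK}), with the previously established invariance of the submanifold and the relation $\tilde J(\tilde\kappa)=qJ(\kappa)q^{-1}$ doing the real work. Your additional remark that the eliminated $q$-equation is recovered from $J'=[f,J]$ (together with the implicit invertibility of $p$) is exactly the consistency point the paper leaves to ``direct calculations.''
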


It is easy to check that if $f,p$ and $a$ are scalars then the elimination of $f$ brings (\ref{fp1.z}) to the equation
\[
 p''=\frac{p'^2}{p}-\frac{p'}{z}+\frac{1}{z}(4p^2+\mu(4a-1+2\kappa))-\frac{\mu^2}{p},
\]
which is (\ref{P3}) with the values of parameters
\[
 \alpha=4,\quad \beta=\mu(4a-1+2\kappa),\quad \gamma=0,\quad \delta=-\mu^2,
\]
that is, the intermediate P$^{(7)}_3$ equation.

\subsection{The case $\nu\ne\mu$: a non-Abelian analog of P$^{(6)}_3$}

\begin{proposition}
The system (\ref{fpq.z}) with $\nu-\mu=\varepsilon\ne0$ admits the invariant submanifold
\begin{equation}\label{fpq.J1}
 J(\kappa)= 2q-\varepsilon z 
    +\varepsilon(zf+2a-\kappa)(2p-\varepsilon z)^{-1}(zf-2a+\kappa-2\mu/\varepsilon-1)=0,
\end{equation}
which is mapped to the submanifold $\tilde J(\tilde\kappa)=0$ with $\tilde\kappa=\kappa+1$ under the transformation (\ref{fpq.zBT}).
\end{proposition}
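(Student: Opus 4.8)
The plan is to avoid differentiating the cubic rational expression in (\ref{fpq.J1}) head-on and instead to read the submanifold off the isomonodromic Lax matrix $A$ from (\ref{fpq.ABK}), evaluated at the distinguished spectral value $\zeta_0$ with $\zeta_0^2=\varepsilon=\nu-\mu$. The point of this value is that the coefficient $\zeta^2-\nu+\mu$ multiplying $B_\zeta$ in (\ref{fpq.AB}) vanishes, so that $A(\zeta_0)$ evolves purely covariantly,
\[
 A(\zeta_0)'=[B(\zeta_0),A(\zeta_0)].
\]
First I would substitute $\zeta=\zeta_0$ into the entries of $A$. Using $(\mu+\nu)/\varepsilon=2\mu/\varepsilon+1$, the diagonal entries become $A_{11}(\zeta_0)=\zeta_0N$ and $A_{22}(\zeta_0)=-\zeta_0M$ with $M=zf+2a-\kappa$ and $N=zf-2a+\kappa-2\mu/\varepsilon-1$, while the off-diagonal entries are $A_{12}(\zeta_0)=-(2p-\varepsilon z)$ and $A_{21}(\zeta_0)=-(2q-\varepsilon z)$. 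A short computation of the non-commutative Schur complement then gives
\[
 A_{22}(\zeta_0)A_{12}(\zeta_0)^{-1}A_{11}(\zeta_0)-A_{21}(\zeta_0)=\varepsilon M(2p-\varepsilon z)^{-1}N+2q-\varepsilon z=J(\kappa),
\]
so that $J(\kappa)=0$ is precisely the condition that the column vector $v=(1,\,\zeta_0(2p-\varepsilon z)^{-1}N)^{t}$ lies in the kernel of $A(\zeta_0)$; indeed $A(\zeta_0)v=(0,-J)^{t}$.

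For the invariance under the flow I would integrate the covariant equation as $A(\zeta_0)=G\,A(\zeta_0)|_{0}\,G^{-1}$, where $G(z)$ solves $G'=B(\zeta_0)G$, $G(0)=1$ and is therefore invertible; hence $\ker A(\zeta_0)$ is carried along by $G$ and its non-triviality is preserved, i.e.\ $J=0$ is invariant. To match the style of the $\nu=\mu$ case one may equivalently reformulate this as a closed homogeneous linear equation for $J$ with coefficients built from the fields, analogous to $J'=[f,J]$, which is the statement that the Schur complement transforms by one-sided multiplications under the conjugation by $G$.

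For the B\"acklund transformation I would use the second relation in (\ref{fpq.K}), namely $(\zeta^2-\nu+\mu)K_\zeta=\widetilde AK-KA$. At $\zeta=\zeta_0$ the left-hand side again vanishes, yielding the intertwining $\widetilde A(\zeta_0)K(\zeta_0)=K(\zeta_0)A(\zeta_0)$. The key consistency point is that the distinguished value is common to both the old and the transformed systems, since $\tilde\nu-\tilde\mu=(-\mu+2\nu)-\nu=\nu-\mu=\varepsilon$. Therefore, if $A(\zeta_0)v=0$ then $\widetilde A(\zeta_0)\bigl(K(\zeta_0)v\bigr)=K(\zeta_0)A(\zeta_0)v=0$, so $K(\zeta_0)v$ lies in the kernel of $\widetilde A(\zeta_0)$. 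Evaluating $K(\zeta_0)v$ from (\ref{fpq.ABK}) gives a vector whose first component $\varepsilon(2p-\varepsilon z)^{-1}N$ is generically invertible; normalizing to the form $(1,\ast)^{t}$ reproduces the degeneracy condition for $\widetilde A(\zeta_0)$, which is $\tilde J(\tilde\kappa)=0$ with $\tilde\kappa=\kappa+1$, since $A$ carries $\kappa$ and $\widetilde A$ carries $\tilde\kappa$.

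The genuinely delicate points are the two invertibility assumptions hidden in the argument: the Schur complement requires $A_{12}(\zeta_0)=-(2p-\varepsilon z)$ to be invertible, and the B\"acklund step requires the first component of $K(\zeta_0)v$ to be invertible in $\mathcal{A}$; both hold on the generic stratum but must be flagged explicitly because we work over a non-commutative algebra with possible zero divisors. I would also check carefully that the intertwining indeed carries the \emph{normalized} kernel vector of $A(\zeta_0)$ to a normalizable kernel vector of $\widetilde A(\zeta_0)$, and that the induced shift is exactly $\tilde\kappa=\kappa+1$ rather than another constant. A purely direct alternative---differentiating (\ref{fpq.J1}) by means of (\ref{fpq.z}) and showing $J'=XJ+JY$ for suitable field-dependent $X,Y$---is available, but the differentiation of the triple product $M(2p-\varepsilon z)^{-1}N$, and in particular of the inverse factor, is where the bookkeeping becomes heavy; the Lax route sidesteps precisely that computation.
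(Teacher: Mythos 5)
Your proof is correct and takes essentially the same route as the paper: the paper likewise evaluates the Lax matrix at the distinguished point $\zeta^2=\varepsilon=\nu-\mu$ where $A'=[B,A]$, identifies $J$ with a quasideterminant of $A$ there (your Schur complement $A_{22}A_{12}^{-1}A_{11}-A_{21}$ is exactly $-|A|_{21}$), and obtains the B\"acklund invariance from the intertwining $\widetilde AK=KA$ at that same point, noting $\tilde\nu-\tilde\mu=\varepsilon$. Your kernel-vector and normalization argument (with its flagged invertibility caveats) is just a repackaging of the paper's appeal to the partial-first-integral property of quasideterminants and their homological relations.
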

\begin{proof}
In this case the partial first integral is less obvious, but we can find it with the help of the representation (\ref{fpq.AB}). For $\zeta=\varepsilon^{1/2}$, the $2\times2$ matrix $A$ satisfies the Lax equation $A'=[B,A]$. It is easy to prove that the quasideterminant $|A|_{12}=a_{12}-a_{11}a^{-1}_{21}a_{22}$ is a partial first integral for such an equation and this gives the expression (\ref{fpq.J1}), up to a scalar factor. 

Moreover, for $\zeta=\varepsilon^{1/2}$ we have $\widetilde AK=KA$, where
$K=\left(\begin{smallmatrix} 0 & \zeta\\ \zeta & -\nu q^{-1}\end{smallmatrix}\right)$, and from here it is easy to obtain the relation $|\widetilde A|_{21}=|A|_{12}$. Taking into account the homological relations for quasideterminants, this proves that the submanifold $J=0$ is preserved under the B\"acklund transformation.
\end{proof}

\begin{remark}
For the case $\varepsilon=0$ from Section \ref{s:numu}, the above proof is not directly applicable, since $A$ contains a term with $\zeta^{-1}$. Nevertheless, the integral (\ref{fpq.J1}) can be defined also in this case by combining  $\varepsilon$ with the last factor and setting $\varepsilon=0$. This brings to the constraint
\[
 2q-(zf+2a-\kappa)p^{-1}\mu=0 \quad\Rightarrow\quad 2qp-\mu(zf+2a-\kappa)=0.
\]
It is slightly different from (\ref{fpq.J0}), but it also defines an invariant submanifold for (\ref{fpq.z}) with $\nu=\mu$. In order to obtain (\ref{fpq.J0}), we can apply the change $\kappa-2\mu/\varepsilon=-\hat\kappa$ (\ref{fpq.J1}). Since $\tilde\mu=\mu+\varepsilon$, the new parameter is also transformed according to the rule $\tilde{\hat\kappa}=\hat\kappa+1$, then (\ref{fpq.J1}) takes the form
\[
 2q-\varepsilon z +\varepsilon(zf+2a+\hat\kappa-2\mu/\varepsilon)(2p-\varepsilon z)^{-1}(zf-2a-\hat\kappa)=0,
\]
and passing to the limit $\varepsilon\to0$ in this expression, we obtain
\[
 2q -\mu p^{-1}(zf-2a-\hat\kappa)=0 \quad\Rightarrow\quad 2pq-\mu(zf-2a-\hat\kappa)=0.
\]
Thus, the constraint (\ref{fpq.J1}) contains two constraints of type (\ref{fpq.J0}) as the limiting cases. 
\end{remark}

The elimination of $q$ by use of (\ref{fpq.J1}) brings to the system (\ref{fp2.z}) below. As before, its Lax representation is obtained from the general formulas (\ref{fpq.AB}) and (\ref{fpq.K}) by replacing $q$; the matrix $A$ becomes rather unwieldy and we do not write it explicitly.

\begin{proposition}
The system
\begin{equation}\label{fp2.z}
\begin{aligned}
 (zf)'&= 2p-\varepsilon z 
     +\varepsilon(zf+2a-\kappa)(2p-\varepsilon z)^{-1}(zf-2a+\kappa-2\mu/\varepsilon-1)+2[a,f],\\
 p'&= fp+pf-\mu
\end{aligned} 
\end{equation}
is invariant with respect to the B\"acklund transformation
\begin{gather*}
 2\tilde p=\varepsilon z -\varepsilon(zf+2a-\kappa)(2p-\varepsilon z)^{-1}(zf-2a+\kappa-2\mu/\varepsilon-1),\\
 \tilde f=-f+(\mu+\varepsilon)\tilde p^{-1},\quad 
 \tilde\mu=\mu+\varepsilon,\quad \tilde\kappa=\kappa+1.
\end{gather*}
\end{proposition}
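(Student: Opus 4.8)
The plan is to derive both the reduced system (\ref{fp2.z}) and its B\"acklund transformation by simply eliminating $q$ on the invariant submanifold $J(\kappa)=0$ and then invoking the invariance statements already established, rather than by any fresh computation. The deepest ingredients — the explicit form of the partial first integral (\ref{fpq.J1}) and the fact that it is preserved by (\ref{fpq.zBT}) — are in hand, so what remains is essentially bookkeeping.

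First I would solve the constraint (\ref{fpq.J1}) for $q$, obtaining $2q=\varepsilon z-\varepsilon(zf+2a-\kappa)(2p-\varepsilon z)^{-1}(zf-2a+\kappa-2\mu/\varepsilon-1)$ under the generic assumption that $2p-\varepsilon z$ is invertible in $\mathcal A$. Substituting this into the first equation $(zf)'=2p-2q+2[a,f]$ of (\ref{fpq.z}) turns $-2q$ into exactly the expression $-\varepsilon z+\varepsilon(zf+2a-\kappa)(2p-\varepsilon z)^{-1}(zf-2a+\kappa-2\mu/\varepsilon-1)$, reproducing the first equation of (\ref{fp2.z}) verbatim, while the second equation $p'=fp+pf-\mu$ carries over unchanged. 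The only point needing justification is that the discarded equation $q'=-fq-qf+\nu$ is satisfied identically once $q$ is treated as a function of $f,p,z$ and propagated by the reduced $(zf)',p'$ flow. This is precisely the content of the invariance of $J(\kappa)=0$: since $J$ evolves linearly in $J$ along the full flow (\ref{fpq.z}), imposing $J=0$ is compatible with the dynamics, so differentiating the above expression for $q$ along the reduced flow automatically returns $-fq-qf+\nu$. Hence no consistency check beyond the invariance already proved is required.

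Next I would restrict the transformation (\ref{fpq.zBT}) to $J=0$. The rule $\tilde p=q$, combined with the solved expression for $q$, gives immediately $2\tilde p=\varepsilon z-\varepsilon(zf+2a-\kappa)(2p-\varepsilon z)^{-1}(zf-2a+\kappa-2\mu/\varepsilon-1)$. Using $\nu=\mu+\varepsilon$ together with $q=\tilde p$, the rule $\tilde f=-f+\nu q^{-1}$ becomes $\tilde f=-f+(\mu+\varepsilon)\tilde p^{-1}$, and the parameter transformations $\tilde\mu=\nu=\mu+\varepsilon$ and $\tilde\kappa=\kappa+1$ are read off directly from (\ref{fpq.zBT}) and from the invariant-submanifold proposition. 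I would deliberately avoid recomputing the unwieldy formula for $\tilde q$ in (\ref{fpq.zBT}): on the image submanifold $\tilde q$ is fixed by $\tilde J(\tilde\kappa)=0$, and the agreement of these two determinations of $\tilde q$ is exactly the already-proved statement that (\ref{fpq.zBT}) maps $J(\kappa)=0$ to $\tilde J(\tilde\kappa)=0$.

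Finally, the assertion that this restricted map sends solutions of (\ref{fp2.z}) to solutions of (\ref{fp2.z}) with parameters $\tilde\mu,\tilde\kappa$ follows with no extra work: the full transformation (\ref{fpq.zBT}) intertwines (\ref{fpq.z}) with itself and carries $J(\kappa)=0$ into $\tilde J(\tilde\kappa)=0$, so it descends to a map between the two reduced flows, and the inherited Lax representation comes from (\ref{fpq.AB})--(\ref{fpq.K}) by the same substitution of $q$. I expect the only genuinely error-prone part to be tracking the order of the non-commutative factors in the term quadratic in $f$ and applying $\nu=\mu+\varepsilon$ uniformly; there is no structural obstacle to overcome, only careful non-Abelian algebra.
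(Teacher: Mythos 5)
Your proposal is correct and follows essentially the same route as the paper: the paper also obtains (\ref{fp2.z}) by solving $J(\kappa)=0$ for $q$ and substituting into (\ref{fpq.z}), and obtains the B\"acklund transformation by restricting (\ref{fpq.zBT}) to the invariant submanifold, with all the real work carried by the previously proved propositions on the partial first integral and on its behavior under (\ref{fpq.zBT}). Your explicit justification that the discarded equation $q'=-fq-qf+\nu$ holds automatically on the reduced flow (via the invariance of $J=0$ and ODE uniqueness) is a detail the paper leaves implicit, and it is sound.
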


This system gives an analog of P$^{(6)}_3$ equation with generic parameters. In order to demonstrate this, let us compare it with (\ref{P3}) assuming that all variables are scalar. In this case we can set $a=0$ without loss of generality (this is equivalent to changing of $\kappa$), which gives the system
\begin{equation}\label{fp2.scalar}
 (zf)'= 2p-\varepsilon z +\frac{\varepsilon(zf-\kappa)(zf+\kappa-2\mu/\varepsilon-1)}{2p-\varepsilon z},\quad
 p'= 2fp-\mu.
\end{equation}
It is clear that the variable $f$ can be eliminated, but the calculations here turn out to be more complicated than in the previous sections, where P$_3$ was obtained directly for the variable $p$. Comparing the terms with derivatives with the canonical list of Painlev\'e equations, one can find a point change
\[
 p(z)=\frac{\varepsilon z}{2(1-v(Z))},\quad Z=z^2,
\]
which leads to P$_5$ equation
\[
 v''=\left(\frac{1}{2v}+\frac{1}{v-1}\right)(v')^2-\frac{v'}{Z}
  +\frac{(v-1)^2}{Z^2}\left(\alpha v+\frac{\beta}{v}\right)+\gamma\frac{v}{Z}+\delta\frac{v(v+1)}{v-1}
\]
with the values of parameters
\[
 \alpha= \frac{\mu^2}{2\varepsilon^2},\quad
 \beta= -\frac{(2\mu+\varepsilon-2\varepsilon\kappa)^2}{8\varepsilon^2},\quad
 \gamma= \frac{\varepsilon}{2},\quad
 \delta=0.
\]
It is known that P$_5$ with $\delta=0$ is a degenerate case which is related with $P_3$ (see \cite{GLS}), therefore it is possible to find also a substitution 
\[
 w= \frac{2p(2p-\varepsilon z)}{zp'-2\kappa p+\mu z}
\]
which brings (\ref{fp2.scalar}) to the P$^{(6)}_3$ equation (\ref{P3}) with parameters equal to
\[
 \alpha= -2\kappa-1,\quad \beta= \varepsilon(2\kappa-1)-4\mu,\quad \gamma=1, \quad \delta=-\varepsilon^2.
\]
The inverse substitution is
\[
 4p=z(w'+w^2+\varepsilon)-2\kappa w.
\]

\section*{Acknowledgements}

The authors are grateful to V.V.~Sokolov and I.A.~Bobrova for many useful discussions. 

The work was done at Ufa Institute of Mathematics with the support by the grant \#21-11-00006 of the Russian Science Foundation, https://rscf.ru/project/21-11-00006/.

\section*{Data Availability Statement}

Data sharing is not applicable to this article as no new data were created or analyzed in this study.



\begin{thebibliography}{99}

\bibitem{Mikhailov_1981} A.V. Mikhailov. The reduction problem and the inverse scattering method. 
{\em Physica D \bf 3:1--2} (1981) \href{http://dx.doi.org/10.1016/0167-2789(81)90120-2}{73--117}.

\bibitem{Mikhailov_1979} A.V. Mikhailov.
 Integrability of a two-dimensional generalization of the Toda chain.
 {\em Sov. Phys. JETP Lett \bf 30} (1979) 
 \href{http://www.jetpletters.ac.ru/ps/428/article_6730.shtml}{414--418}.

\bibitem{Salle_1982} M.A. Salle. Darboux transformations for nonabelian and nonlocal equations of the Toda lattice type. {\em Theor. Math. Phys. \bf 53:2} (1982) \href{http://dx.doi.org/10.1007/BF01016678}{227--237}.

\bibitem{Li_Nimmo_2008} C.X. Li, J.J.C. Nimmo. 
 Quasideterminant solutions of a non-Abelian Toda lattice and kink solutions of a matrix sine-Gordon equation. 
 {\em Proc. R. Soc. A. \bf 464} (2008) \href{http://doi.org/10.1098/rspa.2007.0321}{951--966}.

\bibitem{Leznov_Saveliev_1980} A.N. Leznov, M.V. Saveliev. 
 Representation theory and integration of nonlinear spherically symmetric equations to gauge theories.
 {\em Commun. Math. Phys. \bf 74} (1980) \href{https://doi.org/10.1007/BF01197753}{111--118}.

\bibitem{Burtsev_Zakharov_Mikhailov_1987} S.P. Burtsev, V.E. Zakharov, A.V. Mikhailov. Inverse scattering method with the variable spectral parameter. {\em Theor. Math. Phys. \bf 70:3} (1987) \href{http://dx.doi.org/10.1007/BF01040999}{227--240}.

\bibitem{Winternitz_1992} P. Winternitz. Physical applications of Painlev\'e type equations quadratic in the highest  derivatives. In: {\em D. Levi, P. Winternitz (eds.) Painlev\'e transcendents, their asymptotics and physical applications}, pp. 425--432, New York: Plenum, 1992.

\bibitem{Burtsev_1993} S.P. Burtsev. The Maxwell--Bloch system with pumping and the fifth Painlev\'e equation. {\em Phys. Lett. A \bf 177:4-5} (1993) \href{https://doi.org/10.1016/0375-9601(93)90012-O}{341--344}.

\bibitem{Schief_1994} W.K. Schief. B\"acklund transformations for the (un)pumped Maxwell--Bloch system and the fifth Painlev\'e equation. {\em J. Phys. A: Math. Gen. \bf 27:2} (1994) \href{https://doi.org/10.1088/0305-4470/27/2/036}{547--557}. 

\bibitem{Clarkson_Mansfield_Milne_1996} P.A. Clarkson, E.L. Mansfield, A.E. Milne.
 Symmetries and exact solutions of a $(2+1)$-dimensional sine-Gordon system.
 {\em Phil. Trans. Roy. Soc. London, Ser. A \bf 354:1713} (1996) \href{https://doi.org/10.1098/rsta.1996.0079}{1807-1835}.

\bibitem{Retakh_Rubtsov_2010} V. Retakh, V. Rubtsov. 
 Noncommutative Toda chain, Hankel quasideterminants and Painlev\'e II equation. {\em J. Phys. A \bfseries 43} (2010) 
 \href{https://doi.org/10.1088/1751-8113/43/50/505204}{505204}.  
 
\bibitem{Kawakami_2015} H. Kawakami. Matrix Painlev\'e systems. 
{\em J. Math. Phys. \bf 56} (2015) \href{https://doi.org/10.1063/1.4914369}{033503}.

\bibitem{Adler_Sokolov_2021} V.E. Adler, V.V. Sokolov. On matrix Painlev\'e II equations. 
 {\em Theoret. Math. Phys. \bf 207:2} (2021) \href{http://dx.doi.org/10.1134/S0040577921050020}{560--571}.
 
\bibitem{Bobrova_Sokolov_2021a} I. Bobrova, V. Sokolov. On matrix Painlev\'e-4 equations. Part 1: Painlev\'e--Kovalevskaya test. (2021) \href{http://128.84.4.18/abs/2107.11680v2}{arXiv.2107.11680v2}.

\bibitem{Bobrova_Sokolov_2021b} I. Bobrova, V. Sokolov. On matrix Painlev\'e-4 equations. Part 2: Isomonodromic Lax pairs. (2021) \href{https://doi.org/10.48550/arXiv.2110.12159}{arXiv.2110.12159}.

\bibitem{Oevel_Zhang_Fuchssteiner_1989} W. Oevel, H. Zhang, B. Fuchssteiner.
 Master symmetries and multi-Hamiltonian formulations for some integrable lattice systems.
 {\em Progr. Theor. Phys. \bf 81:2} (1989) \href{https://doi.org/10.1143/PTP.81.294}{294--308}.

\bibitem{Cherdantsev_Yamilov_1995} I.Yu. Cherdantsev, R.I. Yamilov.
 Master symmetries for differential-difference equations of the Volterra type.
 {\em Physica D \bf 87:1--4} (1995) \href{https://doi.org/10.1016/0167-2789(95)00167-3}{140--144}.

\bibitem{Adler_2020} V.E. Adler. Painlev\'e type reductions for the non-Abelian Volterra lattices. {\em J. Phys. A: Math. Theor. \bf 54:3} (2020) \href{https://doi.org/10.1088/1751-8121/abd21f}{035204}.

\bibitem{Shabat_Yamilov_1991}  A.B. Shabat, R.I. Yamilov. Symmetries of nonlinear chains.
 {\em Len. Math. J. \bf 2:2} (1991) 377--399. 
 
\bibitem{Garifullin_Habibullin_Yamilov_2015} R.N. Garifullin, I.T. Habibullin, R.I. Yamilov.
 Peculiar symmetry structure of some known discrete nonautonomous equations.
 {\em J. Phys. A: Math. Theor. \bf 48:23} (2015) \href{https://doi.org/10.1088/1751-8113/48/23/235201}{235201}.

\bibitem{Xenitidis_2011} P.D. Xenitidis. 
 Symmetries and conservation laws of the ABS equations and corresponding differential-difference equations of Volterra type.
 {\em J. Phys. A: Math. Theor. \bf 44:43} (2011) \href{https://doi.org/10.1088/1751-8113/44/43/435201}{435201}. 

\bibitem{Clarkson_2019} P.A. Clarkson. Open problems for Painlev\'e equations. {\em SIGMA \bf 15} (2019) \href{https://doi.org/10.3842/SIGMA.2019.006}{006}.

\bibitem{Its_Novokshenov_1986} A.R. Its, V.Yu. Novokshenov. The Isomonodromic Deformation Method in the theory of Painlev\'e equations. {\em Lect. Notes in Math. } (1986) A. Dodd, Eckmann ed., Springer-Verlag, 1191.

\bibitem{GLS} V.I. Gromak, I. Laine, S. Shimomura. Painlev\'e differential equations in the complex plane.
 Berlin: Walter de Gruyter, 2002.

\bibitem{Fokas_Its_Kapaev_Novokshenov_2006} A.S. Fokas, A.R. Its, A.A. Kapaev, V.Yu. Novokshenov. Painlev\'e Transcendents. The Riemann--Hilbert Approach. AMS, 2006.

\end{thebibliography}
\end{document}